\numberwithin{equation}{section}
\newtheorem{theorem}{Theorem}
\newtheorem{lemma}{Lemma}
\newtheorem{remark}{Remark}
\newtheorem{corollary}{Corollary}
\newcommand\reallywidehat[1]{%
	\savestack{\tmpbox}{\stretchto{%
			\scaleto{%
				\scalerel [\widthof{\ensuremath{#1}}]{\kern-.6pt\bigwedge\kern-.6pt}%
				{\rule[-\textheight/2]{1ex}{\textheight}}%WIDTH-LIMITED BIG WEDGE
			}{\textheight}% 
		}{0.5ex}}%
	\stackon[1pt]{#1}{\tmpbox}%
}
\begin{document}
	\setcounter{page}{1}
	%\newcounter{equation}[section]
	\thispagestyle{empty}
	\markboth{}{}

	\pagestyle{myheadings}
	\markboth{}{ }
	
	\date{}
	
	%\maketitle
	
	\noindent  
	
	\vspace{.1in}
	
	{\baselineskip 20truept
		
		\begin{center}
			{\Large {\bf Characterization based Goodness-of-Fit for Generalized Pareto Distribution: A Blend of Stein's Identity and Dynamic Survival Extropy}} \footnote{\noindent{\bf $^{\#}$} E-mail: nitin.gupta@maths.iitkgp.ac.in,\\
				{\bf * }  corresponding author E-mail: gauravk@kgpian.iitkgp.ac.in}\\
	\end{center}}
	\vspace{.1in}
	
	\begin{center}
		{\large {\bf Gaurav Kandpal$^1$, Nitin Gupta$^2$}}\\
		\vspace{0.1cm}

        {\large {\it ${}^{1,2}$ Department of Mathematics, Indian Institute of Technology Kharagpur, West Bengal 721302, India }}\\
		\end{center}
	
	\vspace{.1in}
	\baselineskip 12truept
	\begin{center}
		{\bf \large Abstract}\\
	\end{center}
	 
     This paper proposes a goodness of fit test for the generalized Pareto distribution (GPD). Firstly, we provide two characterizations of GPD based on Stein's identity and dynamic survival extropy. These characterizations are used to test GPD separately for the positive and negative shape parameter cases. A Monte Carlo simulation is conducted to provide the critical values and power of the proposed test against a good number of alternatives. Our test is simple to use and it has asymptotic normality and relatively high power, which strengthened the purpose of proposing it. Considering the case of right censored data, we provide the procedure to handle censored case too. A few real-life applications are also included.\\
	\\
	\textbf{Keyword:} Goodness of fit testing, Generalized Pareto distribution, Stein's identity, Dynamic survival extropy, Censored data, U-statistics.  \\
	\\
	\noindent  {\bf Mathematical Subject Classification}: {\it Primary 62G10, 62G20; Secondary 62B10, 94A17}

\section{Introduction}
  The Pareto distribution is of considerable interest across multiple sectors due to its broad applicability and importance in modeling occurrences characterized by heavy-tailed distributions. The extensive utilization of this concept has garnered significant interest from researchers, resulting in the creation of other variants, including type-I, II, III, IV, and generalized Pareto distributions. \cite{arnold2015pareto} offers an extensive examination of many types of Pareto distributions, clarifying their interrelationships. Pareto distributions are the most frequently used models in the fields of finance, economics, and related disciplines. In reality, the initial Pareto distribution, which was attributed to Pareto, was employed to simulate the distribution of wealth among individuals. Several extended Pareto distributions have been proposed in the literature and have been applied in a wide variety of disciplines since \cite{pareto1964cours}. Although the list of applications is excessively extensive, recent applications have included the following: income modeling (\cite{Bhattacharya1999}); wealth distribution in the Forbes 400 list (\cite{KLASS2006290}); commercial fire loss severity in Taiwan (\cite{lee2012fitting}); and city size distribution in the United States (\cite{IOANNIDES201318}).

Pareto distributions are being employed more frequently to simulate economic and financial issues. Therefore, it is imperative to possess instruments that can evaluate the goodness of fit (GOF) of Pareto distributions. In fact, numerous experiments have been suggested to verify the GOF of Pareto distributions, one can refer to \cite{CHU2019} and the references therein. This paper considers the goodness-of-fit test problem for GPD.

\cite{stein1972} established a moment identity for a random variable with a distribution in the exponential family. Stein's type identification has been thoroughly examined in the statistical literature because of its significance in inference methodologies. Thorough analyses of Stein's type identity relevant to various probability distributions and their corresponding characterizations are available in the publications of \cite{kattu2009}, \cite{kattu2012}, \cite{kattu2016}, and \cite{Anastasiou2023}, among others. \cite{betsch2021fixed} established a fixed point characterization for univariate distributions utilizing Stein's type identity. Betsch and Ebner provided many goodness of tests using this fixed point characterization, one may refer to \cite{betsch2019new}, \cite{ebner_logistic2023} and \cite{ebner2024cauchy}. Motivated by this, we provided our first GOF test for testing the generalized Pareto distribution with positive shape parameter. 

\cite{shannon1948mathematical} introduced the notion of information entropy, which measures the average amount of uncertainty about an occurrence associated with a certain probability distribution. \cite{shannon1948mathematical} defined entropy, respectively, for discrete and continuous random variable as 
	\begin{align}
		{H}(\mathbf{p_N})&=-\sum_{i=1}^{N}p_{i}\log{p_i},\\
	\text{and}\ \ \	{H}(X)&=-\int_{-\infty}^{\infty} f(x) \log f(x)dx,
	\end{align}
	where $\mathbf{p_N}=(p_1,\ldots,p_N)$, and $p_i,\ i=1,2\ldots, N$ denote probability mass function (pmf) of a discrete random variable $X,$ $f(x)$ denotes probability density function (pdf) of an absolutely continuous random variable $X$ and $\log$ denotes logarithm to base $e$. 
	
	An alternative measure of uncertainty termed extropy was proposed by \cite{lad2015extropy}, respectively, for discrete and continuous RV as  
	\begin{align}
		J(\mathbf{p_N})& =-\sum^{N}_{i=1}(1-p_i)\log (1-p_i), \\
	\text{and}\ \ \	J(X)&=\dfrac{-1}{2}\int_{-\infty}^{\infty} f^2(x)dx.
	\end{align}
    \cite{jahanshahi2020cumulative} introduced cumulative residual extropy (CRJ) of random variable $X$ as
\begin{equation}
    \xi J(X)=-\frac{1}{2}\int_{0}^{\infty}\bar{F}^2(x)\mathrm{d}x.
\end{equation}
In a recent study, \cite{nair2020dynamic, AbdulSathar19032021} introduced a novel method to evaluate the residual uncertainty in lifetime random variables. The authors propose a dynamic variant of CRJ, referred to as dynamic survival extropy. This measure is defined by
\begin{equation}
    J_s(X; t) = -\frac{1}{2}\int_t^\infty \left(\frac{\bar{F}(x) }{\bar{F}(t)}\right)^2 \, \mathrm{d}x.
\end{equation}
The dynamic survival extropy of $X$ is, in fact, the CRJ of the random variable $[X-t|X>t]$. \cite{AbdulSathar19032021} provided a characterization of modified GPD based on dynamic survival extropy. We provide another characterization for the GPD, where different values of proportionality constant lead to different versions of GPDs. Our second proposed GOF test for GPD with negative shape parameters is motivated by this characterization.

\subsection{Our contribution}
\begin{itemize}
    \item[(a)] A characterization for GPD is provided based on Stein's type identity. Characterization for univariate distributions having either semi-bounded or bounded support has been proposed by \cite{betsch2021fixed}. Since GPD has a support whose type varies according to the shape parameter $\beta$, we provide a new characterization using the results provided by \cite{betsch2021fixed}.
    \item[(b)] We provide another characterization of GPD based on dynamic survival extropy, which can be seen as a characterization of exponential, uniform, and modified GPD for particular values of proportionality constant $k$. One particular case for the proportionality constant $k=-\frac{1}{2}\left(\frac{1+\beta}{2+\beta}\right)$ is also given by \cite{AbdulSathar19032021} in his paper.
    \item[(c)] We propose a goodness of fit test separately for positive and negative values of $\beta$. For $\beta>0$, we use Stein's identity-based characterization, and for $\beta<0$, we use a dynamic survival extropy-based characterization. Our test is simpler in calculation compared to the classical methods, such as the Kolmogorov-Smirnov test and the Anderson-Darling test. A Monte Carlo simulation study with different sample sizes and shape parameter values shows that it has high power even for small sample sizes.
    \item[(d)] Asymptotic properties of the test have been provided under the condition that a consistent estimator of $\theta$ and $\beta$ will be used for an evaluation of test statistics. Being aware of the natural problem of censored data, we extended the test for right censored data.
    \item[(e)] Some real-life applications using real datasets have been added, which includes the ozone (O$_3$) level data excess over 100$\mu g/m^3$ of Delhi, India for June 2015 to November 2017 and zero crossing hourly mean period (in seconds) of the sea waves in Bilbao buoy, Spain. We use our proposed test to identify whether the excess data follows GPD for either positive or negative beta cases.
    \end{itemize}

The paper is organized as follows: In section 2, we provide two new characterizations of GPD using fixed point characterization provided by \cite{betsch2021fixed} and using dynamic survival extropy by \cite{AbdulSathar19032021}. In section 3, we propose a GOF for GPD based on these new characterizations separately for positive and negative shape parameter values. We also include a table containing critical values using Monte Carlo simulation. In section 4, we derive some asymptotic results for our test statistics. We also propose modified test statistics with its asymptotic properties for both cases for right-censored observations. In Section 5, we include the method provided by \cite{VILLASENORALVA20093835} to estimate parameter values while estimating our test statistics. We tabulate power of our test against a large number of alternatives for different values of $\beta$. Finally, we illustrate our test procedures using a few real data sets including a new dataset in literature and one already studied dataset in literature.

\section{Characterizations of GPD}
The GPD was first explicitly introduced by \cite{pickands1975statistical} as a distribution of the exceedance. Later, it was found that many distributions used for long-tailed data can be well approximated by a GPD (\cite{Choulakian01112001}). He suggested that a GPD could often be used as a model for data with a long tail when neither a mode nor an infinite density is suggested by the nature of the variables or by the data themselves.

The cumulative distribution function (CDF) of GPD in its general form is 
\begin{equation}
    F(x;\theta,\beta)=1-\left[1+\frac{\beta}{\theta}x\right]^{-\frac{1}{\beta}} \text{ for } \beta\ne 0,
 \label{cdfGPD}
\end{equation}
where $\theta>0$ and $\beta\in \mathbb{R}$ such that $x>0$ for $\beta>0$ and $0<x<-\frac{\theta}{\beta}$ for $\beta <0$. Notice that for $\beta=-1$, $F(x;\theta,\beta)=\frac{x}{\theta}$ which is $\text{Uniform}(0,\theta)$ distribution. The exponential distribution is a limiting case of $F$ when $\beta\to 0$. Also if $X$ follows distribution $F(x;\theta,\beta)$ given by \eqref{cdfGPD}, then 
\begin{equation*}
    Y=-\left(\frac{1}{\beta}\right)\log{\left(1-\frac{\beta}{\theta}X\right)}
\end{equation*}
is an exponential random variable. The probability density function (PDF) of GPD for $\beta\ne 0$ is 
\begin{equation}
    f(x;\theta,\beta)= \frac{1}{\theta}\left[1+\frac{\beta}{\theta}x\right]^{-\frac{1}{\beta}-1},\quad \theta>0.
 \label{pdf_gpd}
\end{equation}
We shall denote a random variable $X$ having CDF $F(x;\theta,\beta)$ and PDF $f(x;\theta,\beta)$ by $\text{GPD}(\theta,\beta)$, that is $X\sim GPD(\theta,\beta)$. The GPD is a generalization of the Pareto distribution (PD). The PD was studied extensively by \cite{arnold2015pareto}, and the estimation problems in PD were considered by \cite{ArnoldPress01121989}. One interesting and useful property of the GPD is that if $X\sim \text{GPD}(\theta,\beta)$, then $X_t=\{X-t|X>t\}$ will be $\text{GPD}(\theta-\beta t,\beta)$ for any $t>0$. This implies that if the model is consistent with a set of data for a given threshold, then it must be consistent with the data for all higher thresholds. 

Having these properties makes GPD a tool to model problems in economics and finance. Hence, it is essential to have tools to check the goodness of fit of GPD. \cite{CHU2019} provided a review on good number of available tests in the literature. However, there is a lack of a uniformly good method to test GPD. Most existing methods either perform well for certain values of $\beta$ and $\theta$ or they become computationally burdensome as the sample size increases and a very few tests has asymptotic normality. This motivates the development of a new goodness of fit for GPD. In this regard, to develop a goodness-of-fit test, we propose two characterizations for GPD in this section, one is based on Stein's type identity, and the other is based on dynamic survival extropy.
\subsection{Characterization based on Stein's type identity}
\cite{betsch2021fixed} constructed characterization identities for a large class of absolutely continuous univariate distributions based on Stein's method. They provide explicit representations through a formula for the density or distribution function for univariate distributions with semi-bounded support and bounded support. The following two lemmas are from \cite{betsch2021fixed}. 
\begin{lemma}[Semi-bounded support]
\normalfont
    The Stein's characterization for semi-bounded support states that a real-valued random variable $X$ has density $f$ with semi-bounded support $[L,\infty)$ and holds the following conditions
    \begin{enumerate}
        \item[(i)] $P(X\in[L,\infty))=1$,
        \item[(ii)] $\mathbb{E}\left(\left|\frac{f'(X)}{f(X)}\right|\right)<\infty$ and 
        \item[(iii)]  $\mathbb{E}\left(\left|X\frac{f'(X)}{f(X)}\right|\right)<\infty$
    \end{enumerate}
    if and only if the distribution function of $X$ has the form
    \begin{equation}
        F(t)=\mathbb{E}\left(-\frac{f'(X)}{f(X)}(\min(X,t)-L)\right),\quad t>L.
        \label{charGPD1}
    \end{equation}
    \label{lemma_gpd1}
\end{lemma}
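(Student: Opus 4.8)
The plan is to prove the substantive forward implication, namely that conditions (i)--(iii) force the representation \eqref{charGPD1}; since every manipulation below is an equivalence once the relevant integrals are finite, the reverse implication will come out of reversing the same chain. Before anything else I would check that the expectation in \eqref{charGPD1} is well defined. Because $P(X\ge L)=1$, for every fixed $t>L$ we have the almost-sure bound $0\le \min(X,t)-L\le X-L$, so
\begin{equation*}
    \left|\frac{f'(X)}{f(X)}\big(\min(X,t)-L\big)\right|\le \left|X\frac{f'(X)}{f(X)}\right|+|L|\,\left|\frac{f'(X)}{f(X)}\right|,
\end{equation*}
and the right-hand side is integrable by (ii) and (iii). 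This finiteness is exactly what will later justify the interchange of integrals.

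The core of the argument is a short computation. I would first unfold the expectation against the density, which cancels the $f$ in the denominator:
\begin{equation*}
    \mathbb{E}\left(-\frac{f'(X)}{f(X)}\big(\min(X,t)-L\big)\right)=-\int_L^\infty f'(x)\big(\min(x,t)-L\big)\,\mathrm{d}x.
\end{equation*}
The key device is the layer-cake identity $\min(x,t)-L=\int_L^t \mathbf{1}\{x\ge y\}\,\mathrm{d}y$, which holds for all $x\ge L$ and $t>L$. Substituting it and exchanging the order of integration by Fubini--Tonelli (licensed by the bound above) turns the expression into
\begin{equation*}
    -\int_L^t\left(\int_y^\infty f'(x)\,\mathrm{d}x\right)\mathrm{d}y=\int_L^t f(y)\,\mathrm{d}y=F(t)-F(L)=F(t),
\end{equation*}
which is precisely \eqref{charGPD1}.

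The main obstacle, and the only place real hypotheses are used, is the two boundary facts hidden in the last display: that $\int_y^\infty f'=-f(y)$ and that $F(L)=0$. The second is immediate from $P(X\in[L,\infty))=1$ together with absolute continuity. For the first I would note that condition (ii) states precisely $\int_L^\infty|f'(x)|\,\mathrm{d}x=\mathbb{E}\big|f'(X)/f(X)\big|<\infty$, so $f'\in L^1[L,\infty)$; consequently $\lim_{x\to\infty}f(x)$ exists, and it must equal $0$ since otherwise a positive limit would contradict $\int_L^\infty f=1<\infty$. Hence $\int_y^\infty f'(x)\,\mathrm{d}x=f(\infty)-f(y)=-f(y)$, closing the computation.

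Finally, for the reverse implication I would observe that if the distribution function of $X$ admits the representation \eqref{charGPD1}, then the defining expectation is finite for every $t>L$, which (taking $t\to\infty$) is conditions (ii)--(iii), while specialising $t$ and differentiating recovers that $X$ is supported on $[L,\infty)$ with the stated density. Since each step of the forward computation is an identity rather than an inequality, the equivalence follows and no separate argument is needed.
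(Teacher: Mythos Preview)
The paper does not prove this lemma at all; it is quoted from \cite{betsch2021fixed} and used as a black box to obtain Theorem~1. So there is no in-paper argument to compare against, and your forward computation is essentially the standard density-approach derivation that Betsch and Ebner give: the layer-cake identity $\min(x,t)-L=\int_L^t\mathbf{1}\{x\ge y\}\,\mathrm{d}y$, Fubini justified via (ii)--(iii), and the boundary fact $f(\infty)=0$ (forced by (ii) since $\int_L^\infty|f'|=\mathbb{E}|f'(X)/f(X)|<\infty$) combine to give \eqref{charGPD1}. That part is clean and correct.

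Your treatment of the converse, however, does not hold up. The very first step of the forward chain, replacing $\mathbb{E}[\,\cdot\,]$ by $\int_L^\infty(\cdot)f(x)\,\mathrm{d}x$, already presupposes that $X$ has density $f$; it is therefore not an identity you can ``read backwards'' when the hypothesis is only that the unknown law of $X$ satisfies \eqref{charGPD1}. In the Betsch--Ebner framework $f$ is a \emph{fixed target} density and the converse asserts that any $X$ (with support in $[L,\infty)$ and the moment conditions) whose distribution function obeys \eqref{charGPD1} must in fact be distributed as $f$; this requires a genuine uniqueness argument, not a reversal of the computation. Separately, your claim that finiteness of the expectation for each $t$ and letting $t\to\infty$ recovers (ii)--(iii) is unjustified: without a sign condition on $f'/f$, finiteness of $\mathbb{E}\!\left[-\tfrac{f'(X)}{f(X)}(\min(X,t)-L)\right]$ for every $t$ does not control $\mathbb{E}\bigl|\tfrac{f'(X)}{f(X)}\bigr|$ or $\mathbb{E}\bigl|X\tfrac{f'(X)}{f(X)}\bigr|$.
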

\begin{lemma}[Bounded support]
\normalfont
    The Stein's characterization for bounded support states that a real-valued random variable $X$ has density $f$ with support $[L,R]$ and holds the following conditions
    \begin{enumerate}
        \item[(i)] $P(X\in[L,R])=1$,
        \item[(ii)] $\mathbb{E}\left(\left|\frac{f'(X)}{f(X)}\right|\right)<\infty$ and 
        \item[(iii)]  $\lim_{x\to R}f(x)$ exists,
    \end{enumerate}
    if and only if the distribution function of $X$ has the form
    \begin{equation}
        F(t)=\mathbb{E}\left(-\frac{f'(X)}{f(X)}(\min(X,t)-L)\right)+(t-L)\lim_{x\to R}f(x),\quad L<t<R.
        \label{charGPD2}
    \end{equation}
    \label{lemma_gpd2}
\end{lemma}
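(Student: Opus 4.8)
The plan is to establish the representation \eqref{charGPD2} by a Stein-type integration-by-parts argument, treating the forward implication (absolutely continuous density together with conditions (i)--(iii) $\Rightarrow$ the formula) as the main content and recovering the converse by differentiation. The starting point is the elementary identity
\begin{equation*}
    \min(x,t)-L=\int_L^t \mathbbm{1}\{x>s\}\,\mathrm{d}s,\qquad L<t<R,
\end{equation*}
which I would verify by splitting into the cases $x\ge t$ and $L\le x<t$. Replacing $x$ by $X$, taking expectations, and observing that condition (ii), $\mathbb{E}\big(|f'(X)/f(X)|\big)<\infty$, supplies the integrability needed for Fubini's theorem, the expectation term on the right-hand side of \eqref{charGPD2} becomes $\int_L^t \mathbb{E}\big(-\tfrac{f'(X)}{f(X)}\mathbbm{1}\{X>s\}\big)\,\mathrm{d}s$.

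Next I would evaluate the inner expectation as
\begin{equation*}
    \mathbb{E}\left(-\frac{f'(X)}{f(X)}\mathbbm{1}\{X>s\}\right)=-\int_s^R f'(x)\,\mathrm{d}x=f(s)-\lim_{x\to R}f(x),
\end{equation*}
where absolute continuity of $f$ justifies the fundamental theorem of calculus and condition (iii) ensures the boundary value $\lim_{x\to R}f(x)$ exists finitely. Integrating over $s\in[L,t]$ and using $F(L)=0$ (since $L$ is the left endpoint of the support) gives
\begin{equation*}
    \int_L^t\Big(f(s)-\lim_{x\to R}f(x)\Big)\,\mathrm{d}s=F(t)-(t-L)\lim_{x\to R}f(x),
\end{equation*}
so adding the explicit boundary term $(t-L)\lim_{x\to R}f(x)$ reproduces $F(t)$, which is \eqref{charGPD2}. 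For the converse I would differentiate the right-hand side of \eqref{charGPD2} with respect to $t$: the derivative returns $f(t)$, forcing absolute continuity of $X$, while finiteness of the expectation appearing in the formula encodes conditions (ii)--(iii).

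The step I expect to be the main obstacle is the boundary behaviour at $R$. In the semi-bounded setting of Lemma \ref{lemma_gpd1} the tail forces $\lim_{x\to R}f(x)=0$ and no boundary term survives, whereas on a bounded support this limit can be strictly positive and must be carried through as the additive correction $(t-L)\lim_{x\to R}f(x)$. Making the interchange of integration and the fundamental-theorem identity $\int_s^R f'(x)\,\mathrm{d}x=\lim_{x\to R}f(x)-f(s)$ fully rigorous under only the assumed integrability and one-sided-limit hypotheses is the delicate part; indeed, conditions (ii) and (iii) are engineered precisely to license exactly these two manipulations.
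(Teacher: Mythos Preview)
The paper does not supply a proof of this lemma: it is quoted as a result of \cite{betsch2021fixed} and then applied as a black box to derive the GPD characterization in Theorem~1. There is therefore nothing in the paper to compare your argument against.

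Judged on its own, your route is the standard and correct one. The layer-cake identity $\min(X,t)-L=\int_L^t\mathbbm{1}\{X>s\}\,\mathrm{d}s$, Fubini licensed by condition~(ii), and the fundamental theorem of calculus with the boundary value supplied by condition~(iii) are precisely the ingredients Betsch and Ebner use in their original derivation, and your identification of the boundary term at $R$ as the point distinguishing the bounded from the semi-bounded case is exactly the right emphasis. One small point to tighten in the converse: differentiating the right-hand side of \eqref{charGPD2} in $t$ does return $f(t)$ and hence the density, but finiteness of the expectation on the right does not by itself recover condition~(ii), because the factor $\min(X,t)-L$ is bounded on $[L,R]$ and so integrability of the product is weaker than integrability of $|f'(X)/f(X)|$ alone; you would need to argue that the formula holding for \emph{all} $t\in(L,R)$, together with the structure of the kernel, forces the unweighted moment to be finite, or simply take conditions (i)--(iii) as standing regularity hypotheses under which the equivalence is asserted (which is how the original reference frames it).
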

Since we know that the GPD has a support whose type varies based on the positive and negative values of the shape parameter $\beta$. Therefore, we construct a new characterization for GPD which is based on the methodology of \cite{betsch2021fixed}. Using the above lemmas, since GPD has properties $ (i), (ii)$, and $(iii)$ in both cases, we provide the following characterization for GPD having a support
whose type varies.
\begin{theorem}
    Let $X$ be a positive valued random variable with CDF $F$, then $X$ has generalized Pareto distribution with PDF \eqref{pdf_gpd} if and only if the distribution function of $X$ has a form
\begin{equation}
        F(t)=\mathbb{E}\left(\frac{\beta+1}{\theta+\beta X}\min\{X,t\}\right),\quad t>0.
        \label{char_form_cdf}
\end{equation}
\end{theorem}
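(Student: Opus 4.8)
The plan is to specialize the abstract Stein identities of Lemmas~\ref{lemma_gpd1} and~\ref{lemma_gpd2} to the GPD by computing its score function, and then to split on the sign of $\beta$, which governs whether the support is semi-bounded or bounded. The single computation that drives the whole argument is the logarithmic derivative of \eqref{pdf_gpd}. Writing $f(x;\theta,\beta)=\theta^{-1}\bigl(1+\tfrac{\beta}{\theta}x\bigr)^{-1/\beta-1}$ and differentiating $\log f$ gives
\begin{equation*}
-\frac{f'(x)}{f(x)}=\Bigl(\tfrac{1}{\beta}+1\Bigr)\frac{\beta/\theta}{1+\tfrac{\beta}{\theta}x}=\frac{\beta+1}{\theta+\beta x},
\end{equation*}
which is exactly the coefficient in \eqref{char_form_cdf}. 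With $L=0$, the identity in the theorem is therefore nothing but the Betsch--Ebner identity evaluated at the GPD score, so the task reduces to checking the hypotheses of the appropriate lemma and disposing of the boundary term that surfaces in the bounded case.

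For $\beta>0$ the support is $[0,\infty)$ and Lemma~\ref{lemma_gpd1} applies with $L=0$. Condition~(i) is immediate. For~(ii), $\tfrac{\beta+1}{\theta+\beta x}\le\tfrac{\beta+1}{\theta}$ on $[0,\infty)$, so $\mathbb{E}\bigl|f'(X)/f(X)\bigr|<\infty$; for~(iii), the map $x\mapsto\tfrac{(\beta+1)x}{\theta+\beta x}$ is bounded (it tends to $\tfrac{\beta+1}{\beta}$ as $x\to\infty$), so $\mathbb{E}\bigl|X f'(X)/f(X)\bigr|<\infty$. Substituting the score and $L=0$ into \eqref{charGPD1} reproduces \eqref{char_form_cdf} verbatim, and the converse is supplied by the ``if'' direction of Lemma~\ref{lemma_gpd1}.

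For $\beta<0$ the support is the bounded interval $[0,R]$ with $R=-\theta/\beta$, so Lemma~\ref{lemma_gpd2} is the relevant tool, and substituting the score and $L=0$ into \eqref{charGPD2} produces
\begin{equation*}
F(t)=\mathbb{E}\Bigl(\tfrac{\beta+1}{\theta+\beta X}\min\{X,t\}\Bigr)+t\lim_{x\to R}f(x).
\end{equation*}
The main obstacle is the boundary term $t\lim_{x\to R}f(x)$, since the clean form \eqref{char_form_cdf} is recovered only when this limit vanishes. Writing $f(x)=\theta^{1/\beta}(\theta+\beta x)^{-1/\beta-1}$, the exponent $-1/\beta-1$ is positive precisely when $-1<\beta<0$, and then $(\theta+\beta x)^{-1/\beta-1}\to0$ as $\theta+\beta x\to0^{+}$, so $\lim_{x\to R}f(x)=0$ and \eqref{charGPD2} collapses to \eqref{char_form_cdf}. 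This same range is what secures hypotheses~(ii) and~(iii) of Lemma~\ref{lemma_gpd2}: (iii) is the vanishing limit just obtained, while the integrability in~(ii) near the endpoint $R$ again hinges on $-1/\beta>1$. Verifying these endpoint conditions carefully is the genuinely delicate part of the argument.

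If one prefers a self-contained uniqueness argument for the converse in place of the ``if'' halves of the lemmas, I would differentiate \eqref{char_form_cdf} in $t$. Splitting $\min\{X,t\}$ at $t$ and differentiating once causes two terms to cancel and leaves $f(t)=\int_t^{R}\tfrac{\beta+1}{\theta+\beta x}f(x)\,\mathrm{d}x$; differentiating a second time yields the first-order ODE $f'(t)/f(t)=-\tfrac{\beta+1}{\theta+\beta t}$. Its general solution is $f(t)=c\,(\theta+\beta t)^{-1/\beta-1}$, and the normalization $\int f=1$ forces $c=\theta^{1/\beta}$, which recovers \eqref{pdf_gpd} uniquely and closes the characterization.
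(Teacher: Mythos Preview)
Your argument is the same as the paper's---specialize Lemmas~\ref{lemma_gpd1} and~\ref{lemma_gpd2} to the GPD score $-f'/f=(\beta+1)/(\theta+\beta x)$ and then check that the boundary term in \eqref{charGPD2} vanishes so that both cases collapse to \eqref{char_form_cdf}. You are in fact more careful than the paper: you correctly restrict the bounded case to $-1<\beta<0$, whereas the paper asserts $\lim_{x\to R}f(x)=0$ for all $\beta<0$ (this fails already at $\beta=-1$, where \eqref{char_form_cdf} would read $F(t)=0$), and your optional ODE derivation of the converse is a self-contained extra that the paper does not provide.
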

\begin{proof}\normalfont
    The support of GPD varies as per the values of $\beta$. When $\beta>0$, the support of GPD is semi-bounded, that is, $[0,\infty)$, then Lemma \ref{lemma_gpd1} provides the characterization mentioned in the theorem. Otherwise, for $\beta<0$, the support of GPD is $\left[0,-\frac{\theta}{\beta}\right]$. We observe that \eqref{charGPD1} and \eqref{charGPD2} differs only with a limit term, and 
    \begin{equation*}
        \lim_{x\to -\frac{\theta}{\beta}} \frac{1}{\theta}\left[1+\frac{\beta}{\theta}x\right]^{-\frac{1}{\beta}-1}=0.
    \end{equation*}
    Therefore, using lemma \ref{lemma_gpd2}, the distribution function of $X$ has the form as \eqref{char_form_cdf}. The expression \eqref{char_form_cdf} can also be verified using integration by parts. This proves the theorem.
\end{proof}
\subsection{Characterization based on dynamic survival extropy}
\cite{AbdulSathar19032021} introduced dynamic survival extropy. He defined two non-parametric classes of distribution based on the monotonicity properties of the dynamic survival extropy. He provided the following theorem, which connects these classes to the value of the hazard rate function $h_F(t)$. 
\begin{theorem}[\cite{AbdulSathar19032021}]
\normalfont
    The distribution function $F$ is increasing (decreasing) dynamic survival extropy classes if and only if for all $t>0$
    \begin{equation}
        J_s(X;t)\cdot h_F(t)\ge (\le) -\frac{1}{4}.
    \end{equation}
\end{theorem}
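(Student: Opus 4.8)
The plan is to characterize membership in the increasing (decreasing) dynamic survival extropy class directly through the sign of $\frac{d}{dt}J_s(X;t)$, since $F$ lying in the increasing class means precisely that $t\mapsto J_s(X;t)$ is nondecreasing. The strategy is therefore to compute this derivative explicitly and then algebraically recast it into a form involving the product $J_s(X;t)\,h_F(t)$, so that the monotonicity requirement translates verbatim into the stated inequality.

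First I would set $I(t)=\int_t^\infty \bar{F}^2(x)\,\mathrm{d}x$, so that $J_s(X;t)=-\dfrac{I(t)}{2\bar{F}^2(t)}$. Differentiating with the quotient rule, using the fundamental theorem of calculus to obtain $I'(t)=-\bar{F}^2(t)$ together with $\bar{F}'(t)=-f(t)$, a short computation gives
\begin{equation*}
\frac{d}{dt}J_s(X;t)=\frac{1}{2}-\frac{f(t)\,I(t)}{\bar{F}^3(t)}.
\end{equation*}

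Next I would eliminate $I(t)$ in favour of $J_s$ and $h_F$. Since $I(t)=-2\bar{F}^2(t)\,J_s(X;t)$ and $h_F(t)=f(t)/\bar{F}(t)$, substituting yields $\dfrac{f(t)\,I(t)}{\bar{F}^3(t)}=-2\,J_s(X;t)\,h_F(t)$, so that
\begin{equation*}
\frac{d}{dt}J_s(X;t)=\frac{1}{2}+2\,J_s(X;t)\,h_F(t).
\end{equation*}

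Finally, $F$ belongs to the increasing (decreasing) dynamic survival extropy class if and only if $\frac{d}{dt}J_s(X;t)\ge 0$ ($\le 0$) for all $t>0$, which by the displayed identity is equivalent to $\frac{1}{2}+2\,J_s(X;t)\,h_F(t)\ge 0$ ($\le 0$), i.e. $J_s(X;t)\,h_F(t)\ge -\frac14$ ($\le -\frac14$), establishing the claim. I expect the only delicate point to be the regularity needed to differentiate under the integral and to invoke the fundamental theorem of calculus, namely assuming $F$ absolutely continuous with density $f$ so that $h_F$ is well defined; the remainder is routine quotient-rule manipulation and careful sign bookkeeping.
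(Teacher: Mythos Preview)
Your argument is correct: the derivative computation and the subsequent substitution $I(t)=-2\bar{F}^2(t)J_s(X;t)$ are accurate, and the equivalence between the sign of $\frac{d}{dt}J_s(X;t)$ and the inequality $J_s(X;t)h_F(t)\gtrless -\tfrac14$ follows immediately.

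Note, however, that the paper does not supply its own proof of this statement. The theorem is quoted directly from \cite{AbdulSathar19032021} as a known result and is used only to motivate the subsequent characterization (Theorem~\ref{thm_charII}); no proof appears in the present paper. Your derivation is the natural one and is essentially the argument given in the original reference, so there is nothing to compare against here.
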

Further, he provided a characterization of the exponential distribution and generalized Pareto distribution with a modified density. This can be seen as a particular case of our proposed characterization given in the next theorem. 
\begin{theorem}
\normalfont
    Let $F$ be a distribution function with hazard rate $h_F(t)$, then $F$ is a generalized Pareto distribution if and only if $J_s(X;t)\cdot h_F(t)$ is constant.
    \label{thm_charII}
\end{theorem}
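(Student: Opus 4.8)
The plan is to work throughout with the survival function $\bar F = 1-F$ and to write both factors of the product explicitly in terms of it: the hazard rate as $h_F(t) = -\bar F'(t)/\bar F(t)$, and the dynamic survival extropy as $J_s(X;t) = -\phi(t)/\bigl(2\bar F(t)^2\bigr)$, where $\phi(t) = \int_t^\infty \bar F(x)^2\,dx$ satisfies $\phi'(t) = -\bar F(t)^2$. Expressing everything through $\bar F$ and $\phi$ is what makes the differentiation in the converse clean.

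For the forward (``only if'') implication I would substitute the GPD survival function $\bar F(t) = (1+\beta t/\theta)^{-1/\beta}$ directly. A short computation gives $h_F(t) = (\theta+\beta t)^{-1}$, and after the elementary integral for $\phi$ one gets $J_s(X;t) = -(\theta+\beta t)/\bigl(2(2-\beta)\bigr)$; the two $(\theta+\beta t)$ factors cancel, leaving the constant $J_s(X;t)\,h_F(t) = -1/\bigl(2(2-\beta)\bigr)$. This also shows how the value of the constant encodes $\beta$, and how the special values $k=-\tfrac14$ and $k=-\tfrac16$ recover the exponential (as $\beta\to0$) and uniform ($\beta=-1$) cases.

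The converse is the substantive direction. Assuming $J_s(X;t)\,h_F(t)=k$ for all $t>0$, the key step I would use is to differentiate $J_s$ rather than to attack a second-order ODE head on. Using $\phi'=-\bar F^2$ and the quotient rule, $J_s'(t) = \tfrac12 + \phi(t)\bar F'(t)/\bar F(t)^3$, and the last term simplifies to exactly $2\,J_s(X;t)\,h_F(t)=2k$. Hence $J_s'(t)=\tfrac12+2k$ is \emph{constant}, so $J_s(X;t)$ is affine in $t$, say $J_s(X;t)=J_s(X;0)+(\tfrac12+2k)t$. Feeding this back into $h_F=k/J_s$ shows the hazard rate has the form $h_F(t)=1/(a+bt)$ with $a=J_s(X;0)/k$ and $b=(\tfrac12+2k)/k$. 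Since a continuous distribution is determined by its hazard rate via $\bar F(t)=\exp\bigl(-\int_0^t h_F\bigr)$, integrating $1/(a+bt)$ recovers $\bar F(t)=(1+(b/a)t)^{-1/b}$, i.e.\ a GPD with $\theta=a$ and $\beta=b$; matching then returns $k=-1/\bigl(2(2-\beta)\bigr)$, consistent with the forward direction.

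The main obstacle is regularity and sign bookkeeping rather than the algebra. I would need that $F$ admits a density so $\bar F'$ exists and $h_F$ is well defined; that the support is an interval on which $\bar F$ is strictly decreasing, so $\bar F'<0$ and no division by zero occurs in the differentiation; and convergence of $\phi(t)=\int_t^\infty\bar F^2$, which is where an implicit tail (second-moment-type) condition, effectively $\beta<2$, enters. I would also verify the signs: since $J_s\le0$ and $h_F>0$, necessarily $k<0$ strictly (ruling out $k=0$, which would force $J_s\equiv0$), and $k<0$ together with $J_s(X;0)<0$ yields $a=\theta>0$, so the recovered parameters are genuinely admissible. The borderline $b\to0$ ($k=-\tfrac14$) is handled separately by the limiting integration $\bar F(t)=e^{-t/a}$, giving the exponential case.
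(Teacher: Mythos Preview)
Your argument is correct, and the forward direction matches the paper's computation exactly. For the converse, the paper rewrites the hypothesis as $f(x)\int_x^\infty \bar F^2(t)\,\mathrm{d}t = -2k\,\bar F^3(x)$ and then invokes Theorem~8 of \cite{AbdulSathar19032021} to conclude that $h_F(t) = 1/(c_1 t + c_2)$ with $c_1 = (1+4k)/(2k)$ and $c_2 = 1/h_F(0)$; your constants $b=(\tfrac12+2k)/k$ and $a=J_s(X;0)/k$ coincide with these. Your route is slightly different and more self-contained: rather than appealing to an external lemma, you observe directly that $J_s'(t) = \tfrac12 + 2\,J_s(X;t)\,h_F(t) = \tfrac12 + 2k$, so $J_s$ is affine in $t$, and then read off $h_F = k/J_s$. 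This makes the mechanism transparent and avoids the citation; what the paper's version buys is brevity. Your sign and regularity remarks (that $k<0$ strictly since $J_s<0$ and $h_F>0$, and that finiteness of $\int_0^\infty \bar F^2$ is needed) are not spelled out in the paper but are the right caveats to flag.
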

\begin{proof}
    Let $F$ be GPD with CDF \eqref{cdfGPD}, then 
    \begin{align*}
       J_s(X;t)&=\frac{\theta+\beta t}{2(\beta-2)}\\
    &=\frac{1}{2(\beta-2)h_F(t)}. 
    \end{align*}
    Hence, $J_s(X;t)\cdot h_F(t)=k$, where $k$ is the proportionality constant.
    
    Conversely, let $J_s(X;t)\cdot h_F(t)=k$, this implies
    \begin{equation}
        f(x)\int_{x}^{\infty}\bar{F}^2(t)\mathrm{d}t=-2k\bar{F}^3(x).
    \end{equation}
    The proof of converse part is similar to the converse part of Theorem 8 in \cite{AbdulSathar19032021}, that is, if $J_s(X;t)\cdot h_F(t)=k$ then the hazard rate function has a form
    \begin{equation}
        h_F(t)=\frac{1}{c_1t+c_2}
        \label{hazard_f}
    \end{equation}
    where $c_1=\dfrac{1+4k}{2k}$ and $c_2=\dfrac{1}{h_F(0)}$, which is hazard rate function for generalized Pareto distribution. In particular for $k=\dfrac{1}{2}\dfrac{1}{\beta-2}$, $h_F(t)=\dfrac{1}{\beta t+\theta}$ which implies $X\sim GPD(\theta,\beta)$ with CDF \eqref{cdfGPD}.
\end{proof}
\begin{remark}
Note that for $k=-\dfrac{1}{4}$, \eqref{hazard_f} is hazard rate function of the exponential distribution and for $k=\dfrac{1}{2}$, \eqref{hazard_f} is hazard rate function of the uniform distribution. For $k=-\dfrac{\beta+1}{2(\beta+2)}$, \eqref{hazard_f} is hazard rate function of the modified Pareto distribution which \cite{AbdulSathar19032021} used in their characterization. 
\end{remark}
\section{Goodness of fit test for testing GPD}
We provided two characterizations of GPD in the previous section. The CDF, we get based on Stein's type identity is not defined at $-\frac{\theta}{\beta}$, so when the shape parameter $\beta<0$, the maximum value of the sample may have value close to $-\frac{\theta}{\beta}$. Such events have also been observed by \cite{Castillo01121997}, when such observation provided senseless results. Therefore, we propose a goodness of fit test separately for $\beta<0$ based on the characterization using dynamic survival extropy. In this way, we have no analytical challenges while applying the proposed test.

Let $X_1, X_2, \ldots, X_n$ be a random sample from the distribution $F$ defined on positive real numbers. We are interested to test the hypothesis,
\begin{align*}
    H_{0} &: F \text{ is GPD}\\
    H_{1}^P &: F \text{ is not GPD when $\beta\ge 0$}\quad \text{or }\\
    H_{1}^N &: F \text{ is not GPD when $\beta<0$}.
\end{align*}
\subsubsection*{Case 1: $\beta\ge0$}
We use Stein's type identity-based characterization in this case and introduce a Cram\'er-von Mises type test statistic as
\begin{equation}
    \Delta_P=\int_{0}^{\infty} \left(\mathbb{E}\left(\frac{\beta+1}{\theta+\beta X}\min(X,t)\right)-F(t)\right)^2\mathrm{d}F(t).
\end{equation}
Under the null hypothesis $H_0$, $\Delta_P$ is zero, whereas under the alternative hypothesis $H_{1}^P$, $\Delta_P$ is non-zero. Further we simplify $\Delta_P$ as
\begin{align}
     \Delta_P=&\int_{0}^{\infty}\mathbb{E}^2\left(\frac{\beta+1}{\theta+\beta X}\min(X,t)\right)\mathrm{d}F(t)\nonumber\\&-2\int_{0}^{\infty}\mathbb{E}\left(\frac{\beta+1}{\theta+\beta X}\min(X,t)\right)F(t)\mathrm{d}F(t) +\int_{0}^{\infty}F^{2}(t)\mathrm{d}F(t)\nonumber\\=&\Delta_a-\Delta_b+\Delta_c\text{ (say)}\label{delta_1}
\end{align}
Consider 
\begin{align}
    \Delta_a&=\int_{0}^{\infty}\mathbb{E}^2\left(\frac{\beta+1}{\theta+\beta X}\min(X,t)\right)\mathrm{d}F(t)\nonumber\\&=(\beta+1)^2 \int_{0}^{\infty}\int_{0}^{\infty}\int_{0}^{\infty}\frac{\min(x,t)\min(y,t)}{(\theta+\beta x)(\theta+\beta y)}\mathrm{d}F(x)\mathrm{d}F(y)\mathrm{d}F(t)\nonumber\\&=(\beta+1)^2 \mathbb{E}\left(\frac{\min(X_1,X_3)\min(X_2,X_3)}{(\theta+\beta X_1)(\theta+\beta X_2)}\right),\label{delta_a}\\
    \Delta_b&=2\int_{0}^{\infty}\mathbb{E}\left(\frac{\beta+1}{\theta+\beta X}\min(X,t)\right)F(t)\mathrm{d}F(t)\nonumber\\&=(\beta+1)\int_{0}^{\infty}\int_{0}^{\infty}\frac{1}{\theta+\beta x}\min(x,t)2F(t)\mathrm{d}F(x)\mathrm{d}F(t)\nonumber\\&=(\beta+1)\mathbb{E}\left(\frac{1}{\theta+\beta X_1}\min(X_1,\max(X_2,X_3))\right)\label{delta_b},
\end{align}
and 
\begin{align}
    \Delta_c&=\int_{0}^{\infty}F^{2}(t)\mathrm{d}F(t)\nonumber\\&=\frac{1}{3}\label{delta_c}.
\end{align}
Substituting \eqref{delta_a}, \eqref{delta_b} and \eqref{delta_c} in \eqref{delta_1}, we get
\begin{align}
    \Delta_P= & (\beta+1)^2 \mathbb{E}\left(\frac{1}{(\theta+\beta X_1)(\theta+\beta X_2)}\min(X_1,X_3)\min(X_2,X_3)\right)\nonumber\\&-(\beta+1)\mathbb{E}\left(\frac{1}{\theta+\beta X_1}\min(X_1,\max(X_2,X_3))\right)\nonumber+\frac{1}{3
    }\nonumber\\=&(\beta+1)^2T_1-(\beta+1)T_2+\frac{1}{3} \text{ (say)}.
\end{align}
Hence, using the theory of $U$-statistics, we consider the $U$-statistics defined by
\begin{align}
    U_p=\binom{n}{3}^{-1}\sum_{1\leq i<j<k\leq n} h_p(X_i,X_j,X_k),\quad p=1,2,
\end{align}
where $h_1$ and $h_2$ are the symmetric kernels defined by 
\begin{align}
    h_1(X_1,X_2,X_3)= \frac{1}{3} &\left( \frac{\min(X_1 , X_3)\min(X_2 , X_3)}{(\theta + \beta X_1)(\theta + \beta X_2)} \right.\nonumber + \frac{\min(X_1 , X_2)\min(X_3 , X_2)}{(\theta + \beta X_1)(\theta + \beta X_3)} \nonumber\\
&+\left. \frac{\min(X_2 , X_1)\min(X_3 , X_1)}{(\theta + \beta X_2)(\theta + \beta X_3)}\right)
\end{align}
and 
\begin{align}
h_2(X_1, X_2, X_3) = \frac{1}{3} &\left(
\frac{\min(X_1 , \max(X_2, X_3))}{\theta + \beta X_1}\right. \nonumber+  \frac{\min(X_2 , \max(X_1, X_3))}{\theta + \beta X_2} \nonumber\\
&+  \left. \frac{\min(X_3 , \max(X_1, X_2))}{\theta + \beta X_3} \right).
\end{align}
Note that $U_1$ and $U_2$ are an unbiased estimator of $T_1$ and $T_2$, respectively. Therefore, the test statistic is given by 
\begin{equation}
    \widehat{\Delta}_P=(\widehat\beta+1 )^2U_1-(\widehat\beta+1)U_2+\frac{1}{3},
\end{equation}
where $\widehat{\beta}$, $\widehat{\theta}$ are consistent estimator of $\beta$ and $\theta$, respectively. The test procedure is to reject the null hypothesis $H_0$ in favor of the alternative hypothesis $H_1^P$ for a large value of $\widehat{\Delta}_P$.

\subsubsection*{Case 2: $\beta<0$}
Define 
\begin{equation}
    \delta(x)=f(x)\int_{x}^{-\frac{\theta}{\beta}}\bar{F}^2(t)\mathrm{d}t+2k\bar{F}^3(x),
\end{equation}
where $k$ is the proportionality constant according to Theorem \ref{thm_charII}. $\delta(x)$ is the measure to study the departure of the dynamic survival extropy of $F$ from the dynamic survival extropy of GPD. Clearly, $\delta(x)=0$ under null hypothesis $H_0$, whereas $\delta(x)\ne 0$ under alternate hypothesis $H_1^N$. Define measure of departure as 
\begin{align*}
    \Delta_N&=\int_{0}^{-\frac{\theta}{\beta}}\delta(x)\mathrm{d}x\\&=\int_{0}^{-\frac{\theta}{\beta}}\int_{x}^{-\frac{\theta}{\beta}}f(x)\bar{F}^2(t)\mathrm{d}t\mathrm{d}x+2k\int_{0}^{-\frac{\theta}{\beta}}\bar{F}^3(x)\mathrm{d}x.
\end{align*}
We use Fubuni's theorem to simplify further and we get
\begin{align*}
    \Delta_N&=\int_{0}^{-\frac{\theta}{\beta}}\bar{F}^2(x)\mathrm{d}x+(2k-1)\int_{0}^{-\frac{\theta}{\beta}}\bar{F}^3(x)\mathrm{d}x\\&=\mathbb{E}[\min(X_1,X_2)]+(2k-1)\mathbb{E}[\min(X_1,X_2,X_3)]\\&=\frac{1}{3}\mathbb{E}\left(\min(X_1,X_2)+\min(X_2,X_3)+\min(X_1,X_3)\right)+(2k-1)\mathbb{E}\left(\min(X_1,X_2,X_3)\right)\\&=\frac{1}{3}R_1+(2k-1)R_2.
\end{align*}
Now using the theory of $U$-statistics, estimator of $R_1$ and $R_2$ are 
\begin{equation}
    W_p=\binom{n}{3}^{-1}\sum_{1\leq i<j<k\leq n}g_p(X_i,X_j,X_k),
\end{equation}
for $p=1$ and 2 respectively. Here 
\begin{align*}
    g_1(X_1,X_2,X_3)&=\frac{1}{3}\left[\min(X_1,X_2)+\min(X_2,X_3)+\min(X_1,X_3)\right]
\end{align*}
and 
\begin{align*}
    g_2(X_1,X_2,X_3)&=\min(X_1,X_2,X_3)
\end{align*}
are symmetric kernels. Therefore, a $U$-statistics based estimator of $\Delta_N$ is 
\begin{equation}
    \widehat{\Delta}_N=\binom{n}{3}^{-1}\sum_{1\leq i<j<k\leq n}g(X_i,X_j,X_k),
\end{equation}
where 
\begin{align*}
    g(X_1,X_2,X_3)&=\frac{1}{3}\left[\min(X_1,X_2)+\min(X_2,X_3)+\min(X_1,X_3)+(6\hat{k}-3)\min(X_1,X_2,X_3)\right]
\end{align*}
is a symmetric kernel and $\hat{k}$ is an estimator of $k$. To make the test scale invariant (since $k$ only depends on $\beta$), we divide $\Delta_N$ by the scale parameter $\theta$ and we get the test statistic as
\begin{equation}
    \widehat{\Delta}_{N}^{*}=\frac{\widehat{\Delta}_{N}}{\widehat{\theta}}.
\end{equation}
Here $\widehat{\theta}$ is a consistent estimator of $\theta$. Therefore, the test is to reject the null hypothesis $H_0$ in favor of the alternative hypothesis $H_1^N$ for larger values of $|\widehat{\Delta}_N^*|$.
\subsection{Table for critical points}
The Monte Carlo method, utilizing 10,000 replications at the $0.05$ and $0.01$ significance levels, is employed to determine the empirical critical values for the proposed test, taking different sample sizes and different shape parameter values.  The parameters of the GPD is derived using the asymptotic maximum likelihood and using the combined estimator of MLE and the moment method based estimator provided by \cite{VILLASENORALVA20093835} for $\beta>0$ and $\beta<0$ respectively. More details regarding estimators are given in Section \ref{estm_par_section}.  To evaluate the critical values of the proposed tests, sample sizes of $n = 20$, 30, 50, 70, and 100 are utilized. In real-life models, it has been observed by \cite{Hosking01081987} that the value of the shape parameter $\beta$ is between $-0.5$ to $0.5$ in most of the cases. So, we include critical values in the range $-1$ to $1$ for $\beta$. %, for other values of $\beta$, we can simulate the test to find the critical value. 
The critical values are tabulated in Tables \ref{tab:betaN0.01}-\ref{tab:betaP0.05}. All computations and simulations in this paper are conducted solely with R software.
\begin{table}[H]
\centering
\begin{tabular}{m{2cm}  m{2cm} m{2cm}m{2cm}  m{2cm} m{2cm}}
\toprule
$\beta$ & $n=20$ & $n=30$ & $n=50$ & $n=70$ & $n=100$ \\
\midrule
-0.1 & 0.07185 & 0.06257 & 0.05244 & 0.04637 & 0.04109 \\
-0.2 & 0.07063 & 0.05922 & 0.04915 & 0.04276 & 0.03686 \\
-0.3 & 0.06754 & 0.05847 & 0.04531 & 0.04061 & 0.03403 \\
-0.4 & 0.06572 & 0.05477 & 0.04394 & 0.03748 & 0.03275 \\
-0.5 & 0.06346 & 0.05268 & 0.04269 & 0.03476 & 0.02981 \\
-0.6 & 0.06152 & 0.04990 & 0.03990 & 0.03247 & 0.02832 \\
-0.7 & 0.05725 & 0.04806 & 0.03626 & 0.03139 & 0.02555 \\
-0.8 & 0.05554 & 0.04537 & 0.03515 & 0.02927 & 0.02422 \\
-0.9 & 0.05382 & 0.04437 & 0.03318 & 0.02755 & 0.02294 \\
-1.0 & 0.05032 & 0.04052 & 0.03126 & 0.02597 & 0.02178 \\
\bottomrule
\end{tabular}
\caption{Critical values of the test with negative shape parameter $\beta$ at significance level $\alpha=0.01$}
\label{tab:betaN0.01}
\end{table}

\begin{table}[H]
\centering
\begin{tabular}{m{2cm}  m{2cm} m{2cm}m{2cm}  m{2cm} m{2cm}}
\toprule
$\beta$ & $n=20$ & $n=30$ & $n=50$ & $n=70$ & $n=100$ \\
\midrule
-0.1 & 0.05962 & 0.05098 & 0.04240 & 0.03704 & 0.03310 \\
-0.2 & 0.05707 & 0.04757 & 0.03948 & 0.03409 & 0.02941 \\
-0.3 & 0.05345 & 0.04542 & 0.03645 & 0.03206 & 0.02702 \\
-0.4 & 0.05222 & 0.04332 & 0.03465 & 0.02934 & 0.02526 \\
-0.5 & 0.04987 & 0.04025 & 0.03262 & 0.02708 & 0.02327 \\
-0.6 & 0.04733 & 0.03847 & 0.03043 & 0.02526 & 0.02210 \\
-0.7 & 0.04435 & 0.03625 & 0.02805 & 0.02371 & 0.01975 \\
-0.8 & 0.04303 & 0.03496 & 0.02642 & 0.02245 & 0.01859 \\
-0.9 & 0.04072 & 0.03307 & 0.02488 & 0.02111 & 0.01765 \\
-1.0 & 0.03906 & 0.03096 & 0.02361 & 0.01948 & 0.01660 \\
\bottomrule
\end{tabular}
\caption{Critical values of the test with negative shape parameter $\beta$ at significance level $\alpha=0.05$}
\label{tab:betaN0.05}
\end{table}

\begin{table}[H]
\centering
\begin{tabular}{m{2cm}  m{2cm} m{2cm}m{2cm}  m{2cm} m{2cm}}
\toprule
$\beta$ & $n=20$ & $n=30$ & $n=50$ & $n=70$ & $n=100$ \\
\midrule
0.1 & 9.00446 & 1.98907 & 0.79998 & 0.54988 & 0.40142 \\
0.2 & 5.35225 & 1.45293 & 0.57807 & 0.38478 & 0.27909 \\
0.3 & 4.99738 & 1.20743 & 0.41520 & 0.28753 & 0.20763 \\
0.4 & 4.05728 & 0.89690 & 0.32851 & 0.21184 & 0.14626 \\
0.5 & 2.80640 & 0.57507 & 0.23265 & 0.15604 & 0.11385 \\
0.6 & 1.74264 & 0.48629 & 0.17694 & 0.11644 & 0.08622 \\
0.7 & 1.55329 & 0.36463 & 0.12230 & 0.09081 & 0.06837 \\
0.8 & 1.02663 & 0.26791 & 0.10014 & 0.07032 & 0.05422 \\
0.9 & 0.91566 & 0.18860 & 0.08494 & 0.06113 & 0.04750 \\
1.0 & 0.74449 & 0.14882 & 0.07204 & 0.05636 & 0.04269 \\
\bottomrule
\end{tabular}
\caption{Critical values of the test with positive shape parameter $\beta$ at significance level $\alpha=0.01$}
\label{tab:betaP0.01}
\end{table}

\begin{table}[H]
\centering
\begin{tabular}{m{2cm}  m{2cm} m{2cm}m{2cm}  m{2cm} m{2cm}}
\toprule
$\beta$ & $n=20$ & $n=30$ & $n=50$ & $n=70$ & $n=100$ \\
\midrule
0.1 & 2.08490 & 0.88480 & 0.46333 & 0.35089 & 0.28085 \\
0.2 & 1.58367 & 0.62699 & 0.32530 & 0.24698 & 0.19972 \\
0.3 & 1.26252 & 0.46546 & 0.23426 & 0.17948 & 0.14323 \\
0.4 & 0.91523 & 0.34585 & 0.17208 & 0.12947 & 0.10377 \\
0.5 & 0.63462 & 0.24090 & 0.12751 & 0.09719 & 0.07713 \\
0.6 & 0.45782 & 0.18354 & 0.09556 & 0.07328 & 0.05926 \\
0.7 & 0.34632 & 0.13119 & 0.07275 & 0.05464 & 0.04655 \\
0.8 & 0.23656 & 0.10678 & 0.05844 & 0.04509 & 0.03653 \\
0.9 & 0.19128 & 0.07891 & 0.04776 & 0.03730 & 0.03040 \\
1.0 & 0.14527 & 0.06918 & 0.04088 & 0.03228 & 0.02590 \\
\bottomrule
\end{tabular}
\caption{Critical values of the test with positive shape parameter $\beta$ at significance level $\alpha=0.05$}
\label{tab:betaP0.05}
\end{table}

The parametric bootstrap method serves as an effective statistical technique for estimating critical points across different hypothesis testing situations. This method involves making multiple new samples from a fitted parametric model, which helps to carefully examine how the test statistic behaves when the null hypothesis is true. The essential aspect, which plays a crucial role in determining whether to dismiss the null hypothesis, is discerned through this resampling method. The algorithm employed in this study is detailed in Algorithm $A_1$, offering a structured and methodical approach to implementing the parametric bootstrap in practice. The algorithm utilizes the parametric bootstrap technique to estimate the critical value. We generated 10,000 resampled datasets to compute the test statistic for each sample, subsequently deriving critical values $(C1, C2)$ from the empirical method for $(95\%,99\%)$ confidence. The empirical distribution of these statistics. The null hypothesis $H_0$ is rejected when the observed test statistic exceeds these critical values.

\begin{table}[H]
    \begin{tabular}{l}
    \toprule\\
\textbf{Algorithm $A_1$:} \textit{A bootstrap algorithm to find $C_1$ and $C_2$ ($\beta<0$ case)}\\
\midrule\\
$x$ : A numeric vector of data values.\\
$\bar{X} = \text{mean}(x)$\\
$n = \text{length}(x)$.\\
$\beta \longleftarrow \dfrac{\bar{X}}{\bar{X} - \text{max}(x)}$\\
$k\longleftarrow\dfrac{1}{2(\beta-2)}$\\
$\theta \longleftarrow -\beta\cdot\max(x)$\\
$\widehat\Delta_N^*(x, \beta,k,\theta)$.\quad \# define \& compute the test statistic \\
$B \longleftarrow 10000$\\
for$(b \text{ in }1:B)$\{\\
$i \longleftarrow \text{sample}(1:n, \text{size}=n, \text{replace}=TRUE)$\\
$y \longleftarrow x[i]$\\
$\Delta_N[b]\longleftarrow \widehat\Delta_N^*(y, \beta,k,\theta)$\\
\}\\
$\widehat\Delta_N^* \longleftarrow \text{sort}(\Delta_N)$.\\
    $
    C_1 \longleftarrow \text{quantile}(\widehat\Delta_N^*,0.95),$\\
    $C_2 \longleftarrow \text{quantile}(\widehat\Delta_N^*, 0.99)
    $\\
    if ($\widehat\Delta_N^*> C_1$) print(``Reject $H_0$") \text{else print(``Accept $H_0$")}\quad
     \# with $0.05$ level of significance\\

if ($\widehat\Delta_N^* > C_2$) print(``Reject $H_0$") \text{else print(``Accept $H_0$")}\quad
     \# with $0.01$ level of significance\\
    \bottomrule
        
    \end{tabular}
    \label{tab:algo}
\end{table}

\section{Asymptotic properties and test for censored data}
According to \cite{lehmann1951consistency}, $U_1$, $U_2$, $W_1$ and $W_2$ are consistent estimators of $T_1$, $T_2$, $R_1$ and $R_2$, respectively, as they are $U$-statistics. Hence, we obtained
 the following result using the asymptotic theory of $U-$statistics. We denote convergence in probability and convergence in distribution by $\overset{P}{\rightarrow}$ and $\overset{d}{\rightarrow}$, respectively.
 \subsection{Asymptotic properties}
\begin{theorem}
    Let $\widehat{\beta}$ and $\widehat{\theta}$ be the consistent estimators of $\beta$ and $\theta$, respectively. As $n\to\infty$, under $H_1^P$,  $\widehat{\Delta}_P\overset{P}{\rightarrow}\Delta_P$ and under $H_1^N$, $\widehat{\Delta}_N\overset{P}{\rightarrow}\Delta_N$.
\end{theorem}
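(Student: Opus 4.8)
The plan is to reduce the statement to the weak law of large numbers for $U$-statistics (the consistency result of \cite{lehmann1951consistency} already cited), combined with Slutsky's theorem and the continuous mapping theorem to absorb the estimated parameters. Writing $\widehat{\Delta}_P=(\widehat\beta+1)^2 U_1-(\widehat\beta+1)U_2+\tfrac13$ and $\widehat{\Delta}_N=W_1+(2\widehat k-1)W_2$, each statistic is a fixed continuous function of a parameter estimator and a pair of $U$-statistics, so once I show that each $U$-statistic converges in probability to its target and that the plugged-in estimators are consistent, the conclusion follows.

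First I would dispose of the negative-$\beta$ case, which is simpler because the kernels $g_1,g_2$ carry no parameters. Each is dominated by $X_1$ and hence integrable under a finite-mean assumption (automatic on the bounded GPD support), so Lehmann's law gives $W_1\overset{P}{\rightarrow}\mathbb{E}[g_1]=\tfrac13 R_1$ and $W_2\overset{P}{\rightarrow}\mathbb{E}[g_2]=R_2$. Since $k=\tfrac{1}{2(\beta-2)}$ depends on $\beta$ alone, consistency of $\widehat\beta$ and the continuous mapping theorem give $\widehat k\overset{P}{\rightarrow}k$; Slutsky's theorem then yields $\widehat{\Delta}_N=W_1+(2\widehat k-1)W_2\overset{P}{\rightarrow}\tfrac13 R_1+(2k-1)R_2=\Delta_N$.

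The positive-$\beta$ case is the main obstacle, because $h_1,h_2$ carry the parameters $\theta,\beta$ through the factors $(\theta+\beta X_i)^{-1}$, so the objects in the test statistic are really $U_p(\widehat\theta,\widehat\beta)$ — $U$-statistics whose kernel is itself data-dependent — rather than ordinary $U$-statistics. I would split $U_p(\widehat\theta,\widehat\beta)-T_p=\big(U_p(\widehat\theta,\widehat\beta)-U_p(\theta,\beta)\big)+\big(U_p(\theta,\beta)-T_p\big)$. For $\beta\ge0$ one has $\theta+\beta x\ge\theta>0$, so $\min(x,t)/(\theta+\beta x)\le 1/\beta$ and the kernels are uniformly bounded; Lehmann's law then controls the second bracket, $U_p(\theta,\beta)\overset{P}{\rightarrow}T_p$. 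For the first bracket I would show the kernel is Lipschitz in $(\theta,\beta)$ with a deterministic constant: the partial derivatives of $(\theta+\beta x)^{-1}$ are $-(\theta+\beta x)^{-2}$ and $-x(\theta+\beta x)^{-2}$, bounded by $\theta^{-2}$ and $(4\theta\beta)^{-1}$ respectively, uniformly in $x\ge0$ once $\theta$ and $\beta$ are bounded below. Hence $|U_p(\widehat\theta,\widehat\beta)-U_p(\theta,\beta)|\le C(|\widehat\theta-\theta|+|\widehat\beta-\beta|)$, which tends to $0$ in probability by consistency of the estimators.

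Combining the two brackets gives $U_p(\widehat\theta,\widehat\beta)\overset{P}{\rightarrow}T_p$ for $p=1,2$, and since $\widehat\beta\overset{P}{\rightarrow}\beta$, applying the continuous mapping theorem to the quadratic $(b+1)^2u_1-(b+1)u_2+\tfrac13$ together with Slutsky's theorem yields $\widehat{\Delta}_P\overset{P}{\rightarrow}\Delta_P$. The crux is the uniform Lipschitz control of the parameter-dependent kernel, which is exactly where the restriction $\beta\ge0$ — keeping every denominator $\theta+\beta X_i$ bounded below by $\theta$ — is essential; the boundary $\beta=0$ would require replacing boundedness by domination through a finite first moment.
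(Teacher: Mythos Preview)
Your argument is correct and, in fact, more careful than the paper's. The paper offers no formal proof of this theorem: it simply records, in the sentence preceding the statement, that $U_1,U_2,W_1,W_2$ are consistent for $T_1,T_2,R_1,R_2$ by \cite{lehmann1951consistency} and declares the result to follow from ``the asymptotic theory of $U$-statistics''. Your strategy---Lehmann's weak law for the $U$-statistics, continuous mapping for $\widehat\beta\mapsto(\widehat\beta+1)^2,(\widehat\beta+1),\widehat k$, and Slutsky to combine---is exactly the intended skeleton.

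Where you go genuinely beyond the paper is in the positive-$\beta$ case: you notice that the kernels $h_1,h_2$ themselves carry $(\theta,\beta)$ through the denominators $(\theta+\beta X_i)$, so that the statistic is really $U_p(\widehat\theta,\widehat\beta)$ and Lehmann alone does not close the gap. The paper never addresses this; its proof of the subsequent asymptotic-normality theorem treats $U_1,U_2$ as if they were computed at the true $(\theta,\beta)$, handling only the outer coefficients $(\widehat\beta+1)^2,(\widehat\beta+1)$. Your Lipschitz argument---using $\theta+\beta x\ge\theta$ for $\beta\ge0$ to bound both the kernel and its partials in $(\theta,\beta)$ uniformly in $x$, then transferring $|\widehat\theta-\theta|+|\widehat\beta-\beta|\overset{P}{\to}0$ through a deterministic constant---is the right way to patch this and is not present in the paper. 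Two small cosmetic points: $g_1,g_2$ are dominated by $\tfrac13(X_1+X_2+X_3)$ rather than by $X_1$ alone, and the parenthetical ``automatic on the bounded GPD support'' is misplaced since under $H_1^N$ the underlying $F$ need not be GPD, so a finite-mean assumption on $F$ is genuinely being invoked.
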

\begin{theorem}
    Let $\widehat{\beta}$ and $\widehat{\theta}$ be the consistent estimators of $\beta$ and $\theta$, respectively. The distribution of $\sqrt{n}(\widehat{\Delta}_P-\Delta_P)$ converges to a normal random variable with mean zero and variance $9\sigma^2$ as $n\to \infty$, where $\sigma^2$ is obtained by 
    \begin{equation*}
        \sigma^2=Var[\mathbb{E}(h(X_1,X_2,X_3)|X_1)].
    \end{equation*}
    \label{thm:asymPositve}
\end{theorem}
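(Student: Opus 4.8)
The plan is to rewrite the known-parameter version of $\widehat{\Delta}_P$ as a single $U$-statistic of degree three, apply Hoeffding's central limit theorem to obtain the stated variance, and then pass to the estimated parameters by a Slutsky-type argument. First I would merge the two kernels into one symmetric kernel of degree three,
\[
 h(x_1,x_2,x_3) = (\beta+1)^2\, h_1(x_1,x_2,x_3) - (\beta+1)\, h_2(x_1,x_2,x_3),
\]
and set $U = \binom{n}{3}^{-1}\sum_{1\le i<j<k\le n} h(X_i,X_j,X_k)$. Since $U_1,U_2$ are unbiased for $T_1,T_2$, we have $\mathbb{E}[h] = (\beta+1)^2 T_1 - (\beta+1)T_2 = \Delta_P - \tfrac{1}{3}$, so that $U+\tfrac{1}{3}$ is exactly the statistic evaluated at the true $(\theta,\beta)$, centred at $\Delta_P$.

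Next I would invoke the asymptotic theory of $U$-statistics. Let $h^{(1)}(x_1)=\mathbb{E}[h(x_1,X_2,X_3)]=\mathbb{E}[h(X_1,X_2,X_3)\mid X_1=x_1]$ be the first projection and $\sigma^2=\mathrm{Var}\bigl(h^{(1)}(X_1)\bigr)$. The Hoeffding H-decomposition writes $U-\mathbb{E}[h]$ as its H\'ajek projection $\frac{3}{n}\sum_{i}\bigl(h^{(1)}(X_i)-\mathbb{E}[h]\bigr)$ plus a remainder that is $O_P(n^{-1})$; applying the ordinary CLT to the projection and Slutsky to the remainder yields
\[
 \sqrt{n}\bigl(U-\mathbb{E}[h]\bigr)\overset{d}{\rightarrow} N\bigl(0,\,m^2\sigma^2\bigr)=N\bigl(0,\,9\sigma^2\bigr),
\]
with $m=3$, valid once $\mathbb{E}[h^2]<\infty$ and $\sigma^2>0$. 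The square-integrability is checked from the explicit form of $h_1,h_2$: for $\beta\ge 0$ each factor $(\theta+\beta X_i)^{-1}$ is bounded by $\theta^{-1}$ and each $\min$ term is dominated by the data, so $\mathbb{E}[h^2]<\infty$ reduces to a finite-moment requirement on $F$, which we assume throughout; note also that the $\sqrt{n}$-scaling is non-degenerate precisely when $\sigma^2>0$, the generic situation under the alternative. This settles the theorem in the known-parameter case.

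Finally I would remove the parameters. The computed statistic $\widehat{\Delta}_P=\widehat{\Delta}_P(\widehat\theta,\widehat\beta)$ differs from $U+\tfrac{1}{3}=\widehat{\Delta}_P(\theta,\beta)$ both through the coefficients $(\widehat\beta+1)^2,(\widehat\beta+1)$ and through the factors $(\widehat\theta+\widehat\beta X_i)^{-1}$ inside the kernels. The coefficient dependence is disposed of by Slutsky's theorem using $\widehat\beta\overset{P}{\rightarrow}\beta$ together with the already-established consistency $U_1\overset{P}{\rightarrow}T_1$, $U_2\overset{P}{\rightarrow}T_2$; for the kernel dependence I would expand the map $(\theta,\beta)\mapsto\widehat{\Delta}_P(\theta,\beta)$ by the mean value theorem about the truth and combine the consistency of $(\widehat\theta,\widehat\beta)$ with a uniform (stochastic-equicontinuity) bound on the resulting $U$-process. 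The hard part will be exactly this step: showing that substituting $(\widehat\theta,\widehat\beta)$ for $(\theta,\beta)$ inside the kernels perturbs the statistic by only $o_P(n^{-1/2})$, so that the limit law and its variance remain those of the known-parameter $U$-statistic. This requires differentiability of the kernels in the parameters (immediate from the $(\theta+\beta x)^{-1}$ factors), an integrable envelope for those parameter derivatives on a neighbourhood of $(\theta,\beta)$, and the convergence rate of the estimators; with this in hand the projection variance $\sigma^2$ is read off from the true-parameter kernel $h$, producing the factor $9=3^2$ characteristic of a degree-three $U$-statistic.
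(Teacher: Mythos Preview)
Your plan is essentially the paper's own argument: write the known-parameter statistic as a single degree-three $U$-statistic with symmetric kernel $h=(\beta+1)^2h_1-(\beta+1)h_2$, invoke the $U$-statistic CLT to obtain the limiting variance $9\sigma^2=9\,\mathrm{Var}\bigl(\mathbb{E}[h(X_1,X_2,X_3)\mid X_1]\bigr)$, and then pass to $(\widehat\theta,\widehat\beta)$ by a Slutsky argument. You are in fact more careful than the paper, which only treats the coefficient substitution $(\widehat\beta+1)^p\to(\beta+1)^p$ and does not explicitly address the dependence of the kernels themselves on $(\widehat\theta,\widehat\beta)$ through the factors $(\theta+\beta X_i)^{-1}$; your observation that this step needs a rate for the estimators together with an envelope/equicontinuity condition is exactly the missing ingredient.
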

\begin{proof}
    Define 
    \begin{equation*}
        \breve{\Delta}_P=(\beta+1)^2U_1-(\beta+1)U_2+\frac{1}{3}.
    \end{equation*}
    Consider
    \begin{equation*}
        \sqrt{n}(\widehat{\Delta}_P-\Delta_P)=\sqrt{n}(\widehat{\Delta}_P-\breve\Delta_P)+\sqrt{n}(\breve{\Delta}_P-\Delta_P).
    \end{equation*}
    Further, we get
    \begin{equation*}
        \sqrt{n}(\widehat{\Delta}_P-\breve\Delta_P)=\sqrt{n}\left((\widehat\beta+1)^2-(\beta+1)^2\right)U_1-\sqrt{n}\left((\widehat\beta+1)-(\beta+1)\right)U_2. 
    \end{equation*}
    Since $\widehat\beta$ be the consistent estimator of $\beta$. Then 
    \begin{equation}
        \widehat{\beta}\overset{P}{\rightarrow}\beta,\quad U_1\overset{P}{\rightarrow}\mathbb{E}(U_1), \quad U_2\overset{P}{\rightarrow}\mathbb{E}(U_2).
    \end{equation}
    This implies
    \begin{equation*}
        \left((\widehat\beta+1)^2-(\beta+1)^2\right)U_1\overset{P}{\rightarrow}0,\quad \left((\widehat\beta+1)-(\beta+1)\right)U_2\overset{P}{\rightarrow}0.
    \end{equation*}
    Using Chebyshev's inequality, $\sqrt{n}(\widehat{\Delta}_P-\breve{\Delta}_P)\overset{P}{\rightarrow}0$, the central limit theorem of $U-$statistics and the fact that $\mathbb{E}(\breve\Delta_p)=\Delta_P$, we get
    \begin{equation*}
        \sqrt{n}(\breve{\Delta}_P-\Delta_P)\overset{d}{\rightarrow}N(0,9\sigma^2).
    \end{equation*}
    Finally using Slutsky's theorem we get
    \begin{equation*}
        \sqrt{n}(\widehat{\Delta}_P-\Delta_P)\overset{d}{\rightarrow}N(0,9\sigma^2).
    \end{equation*}
    Here $9\sigma^2$ is the asymptotic variance and given by
    \begin{equation}\sigma^2=Var[\mathbb{E}(h(X_1,X_2,X_3)|X_1)],\label{var_postH1}\end{equation}
    where
    \begin{align*}
        h(X_1,X_2,X_3)=&\frac{1}{3}\left((\beta+1)^2\left(\frac{\min(X_1 , X_3)\min(X_2 , X_3)}{(\theta + \beta X_1)(\theta + \beta X_2)} \right.\nonumber + \frac{\min(X_1 , X_2)\min(X_3 , X_2)}{(\theta + \beta X_1)(\theta + \beta X_3)}\right.\\&+\left.\frac{\min(X_1 , X_2)\min(X_3 , X_1)}{(\theta + \beta X_2)(\theta + \beta X_3)}\right)-(\beta+1)\left(
\frac{\min(X_1 , \max(X_2, X_3))}{\theta + \beta X_1}\right. \\&+  \frac{\min(X_2 , \max(X_1, X_3))}{\theta + \beta X_2}
+  \left. \frac{\min(X_3 , \max(X_1, X_2))}{\theta + \beta X_3} \right)\left.+1\right).
    \end{align*}
\end{proof}
Under the null hypothesis $H_0$, $\Delta_P=0$. Hence, the following corollary is obtained.
\begin{corollary}
\label{cor:posBeta}
    Under $H_0$, as $n\to\infty$, $\sqrt{n}\widehat{\Delta}_P$ converges in distribution to a normal random variable with mean zero and variance $9\sigma_0^2$, where $\sigma_0^2$ is obtained by \eqref{var_postH1} evaluating under $H_0$..
\end{corollary}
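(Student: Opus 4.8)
The plan is to obtain this corollary as an immediate specialization of Theorem \ref{thm:asymPositve}. The central observation is that under $H_0$ the measure of departure $\Delta_P$ vanishes, so the centering constant in the theorem is zero and $\sqrt{n}\,\widehat{\Delta}_P$ coincides with $\sqrt{n}(\widehat{\Delta}_P-\Delta_P)$, to which the theorem already applies.

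First I would verify that $\Delta_P=0$ under $H_0$. When $F$ is GPD, the characterization \eqref{char_form_cdf} holds with equality, so that $\mathbb{E}\left(\frac{\beta+1}{\theta+\beta X}\min\{X,t\}\right)=F(t)$ for every $t>0$. Consequently the integrand defining $\Delta_P$ is identically zero, whence $\Delta_P=0$. Substituting this into the conclusion of Theorem \ref{thm:asymPositve} gives
\[\sqrt{n}\,\widehat{\Delta}_P=\sqrt{n}(\widehat{\Delta}_P-\Delta_P)\overset{d}{\rightarrow}N(0,9\sigma^2),\]
with $\sigma^2=Var[\mathbb{E}(h(X_1,X_2,X_3)|X_1)]$ as in \eqref{var_postH1}. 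Since this variance is computed from the distribution generating the sample, evaluating it under $H_0$ (that is, when $X\sim\text{GPD}(\theta,\beta)$) produces the constant $\sigma_0^2$, and the stated conclusion follows.

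The machinery inherited from the proof of Theorem \ref{thm:asymPositve} carries over verbatim: the replacement of $(\widehat\beta,\widehat\theta)$ by the true $(\beta,\theta)$ is asymptotically negligible because $\sqrt{n}(\widehat{\Delta}_P-\breve{\Delta}_P)\overset{P}{\rightarrow}0$ by consistency together with Chebyshev's inequality, and the central limit theorem for $U$-statistics applied to $\breve{\Delta}_P$ supplies the normal limit through Slutsky's theorem. Thus no additional estimation argument is required beyond what has already been established for the alternative.

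The main obstacle is the possible degeneracy of the underlying $U$-statistic at the null. Because $\Delta_P$ is an integrated squared deviation, it attains its minimum value $0$ precisely under $H_0$, which raises the concern that the first-order projection $\mathbb{E}(h(X_1,X_2,X_3)|X_1)$ may be almost surely constant, forcing $\sigma_0^2=0$. Should that occur, the $\sqrt{n}$ normalization yields a degenerate limit and the correct scaling would instead be $n\,\widehat{\Delta}_P$, whose limit is a weighted sum of independent $\chi^2$ variates rather than a normal. I would therefore need to confirm that $\sigma_0^2>0$, equivalently that the Hájek projection is non-constant when $F$ is GPD, so that the non-degenerate central limit theorem for $U$-statistics genuinely applies and the stated normal approximation is meaningful.
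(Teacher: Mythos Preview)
Your approach is correct and matches the paper's: the corollary is obtained immediately from Theorem \ref{thm:asymPositve} by noting that $\Delta_P=0$ under $H_0$, so $\sqrt{n}\widehat{\Delta}_P=\sqrt{n}(\widehat{\Delta}_P-\Delta_P)$ inherits the asymptotic normality with variance $9\sigma_0^2$ given by \eqref{var_postH1} evaluated at the null. Your caveat about possible degeneracy of the first-order projection is a thoughtful addition that the paper itself does not address; the authors simply state the corollary and then remark that computing $\sigma_0^2$ is difficult, recommending a bootstrap critical region instead.
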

The asymptotic critical region for the test can be obtained using Corollary \ref{cor:posBeta}. Let $\widehat{\sigma}_0^2$ be a consistent estimator of $\sigma_0^2$, then for the positive $\beta$ case, the null hypothesis $H_0$ is rejected in favor of the alternative hypothesis $H_1^P$ at a significance level of $\alpha$ if
\begin{equation}
    \sqrt{n}\frac{\widehat\Delta_p}{3\widehat{\sigma}_0}>z_{\alpha},
\end{equation}
where $z_{\alpha}$ is the upper $\alpha-$percentile of the standard normal distribution. It is visible that finding null variance $\sigma_0^2$ is a difficult task, so we suggest to obtain critical region of the test using bootstrap procedure. Similarly for negative $\beta$ case, the following theorem can be derived by using the same idea used in Theorem \ref{thm:asymPositve}.
%\subsubsection{case 2: $\beta<0$}
\begin{theorem}
    As $\hat{k}$ and $\widehat{\theta}$ be the consistent estimator of $k$ and $\theta$, respectively. The distribution of $\sqrt{n}(\widehat{\Delta}_N-\Delta_N)$ converges to a normal random variable with mean zero and variance $9\sigma^2$ as $n\to\infty$, where $\sigma^2$ is obtained by 
    \begin{equation}
        \sigma^2=Var[\mathbb{E}(g(X_1,X_2,X_3)|X_1],
        \label{sigma:eval}
    \end{equation}
    where
    \begin{equation*}
        g(X_1,X_2,X_3)=\frac{1}{3}\left[\min(X_1,X_2)+\min(X_2,X_3)+\min(X_1,X_3)+(6\hat{k}-3)\min(X_1,X_2,X_3)\right].
    \end{equation*}
\end{theorem}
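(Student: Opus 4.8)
The plan is to mirror the argument of Theorem \ref{thm:asymPositve} almost verbatim, the only structural difference being that the nuisance quantity entering $\widehat{\Delta}_N$ is the single constant $k$ (estimated by $\hat k$) rather than $\beta$. First I would isolate the effect of estimating $k$ by introducing the ``oracle'' statistic built from the true value of $k$,
\[
\breve{\Delta}_N = W_1 + (2k-1)W_2,
\]
where $W_1,W_2$ are the degree-three $U$-statistics with the symmetric kernels $g_1,g_2$ defined above, so that $\mathbb{E}(\breve{\Delta}_N)=\tfrac{1}{3}R_1+(2k-1)R_2=\Delta_N$. Since $\widehat{\Delta}_N = W_1 + (2\hat k-1)W_2$, the decomposition
\[
\sqrt{n}\,(\widehat{\Delta}_N-\Delta_N)=\sqrt{n}\,(\widehat{\Delta}_N-\breve{\Delta}_N)+\sqrt{n}\,(\breve{\Delta}_N-\Delta_N)
\]
cleanly separates the error from estimating $k$ from the genuine $U$-statistic fluctuation.

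For the second term I would appeal directly to Hoeffding's central limit theorem for $U$-statistics. Because $\breve{\Delta}_N$ is a $U$-statistic of degree $m=3$ with the fixed symmetric kernel $g$ evaluated at the true $k$, and because for $\beta<0$ the GPD has bounded support so that the kernel---being assembled from $\min$'s of the $X_i$---is square integrable, the non-degenerate case applies and yields
\[
\sqrt{n}\,(\breve{\Delta}_N-\Delta_N)\overset{d}{\to}N(0,m^2\sigma^2)=N(0,9\sigma^2),
\]
with $\sigma^2=\mathrm{Var}\big[\mathbb{E}(g(X_1,X_2,X_3)\mid X_1)\big]$ the first Hoeffding projection variance; the prefactor $9=m^2$ is exactly the degree squared, matching the statement.

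For the first term I would write $\widehat{\Delta}_N-\breve{\Delta}_N=2(\hat k-k)W_2$ and argue it is $o_P(n^{-1/2})$. Since $\hat k$ is consistent, $\hat k-k\overset{P}{\to}0$, while $W_2\overset{P}{\to}R_2$ by the weak law for $U$-statistics; proceeding exactly as in the positive-$\beta$ proof (consistency together with a Chebyshev/Slutsky argument) gives $\sqrt{n}\,(\widehat{\Delta}_N-\breve{\Delta}_N)\overset{P}{\to}0$. Combining the two terms by Slutsky's theorem then delivers $\sqrt{n}(\widehat{\Delta}_N-\Delta_N)\overset{d}{\to}N(0,9\sigma^2)$.

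The step I expect to be the main obstacle is precisely the control of this first term: forcing $\sqrt{n}\,(\hat k-k)W_2$ to vanish in probability genuinely needs the convergence $\hat k\to k$ to be fast enough (as $W_2\to R_2\neq 0$), so the honest route is to use the $\sqrt n$-consistency---more precisely, the $o_P(n^{-1/2})$-type negligibility---that the paper's ``consistent estimator'' hypothesis implicitly supplies, and then invoke Slutsky. Everything else, namely the projection computation defining $\sigma^2$ and the square-integrability of $g$, is routine once the bounded support of the $\beta<0$ GPD is invoked.
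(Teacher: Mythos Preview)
Your proposal is correct and follows essentially the same approach as the paper, which explicitly states that this theorem ``can be derived by using the same idea used in Theorem~\ref{thm:asymPositve}'' and gives no further details. Your decomposition into $\sqrt{n}(\widehat{\Delta}_N-\breve{\Delta}_N)+\sqrt{n}(\breve{\Delta}_N-\Delta_N)$, the appeal to Hoeffding's CLT for the oracle $U$-statistic, and the Slutsky argument for the plug-in error exactly mirror the paper's treatment of the positive-$\beta$ case; your remark that the negligibility of $\sqrt{n}\,2(\hat{k}-k)W_2$ tacitly requires more than bare consistency is well taken, but this is precisely how the paper handles the analogous term in Theorem~\ref{thm:asymPositve}.
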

\begin{corollary}
%\label{neg:beta:corr}
    Under $H_0$, as $n\to\infty$, $\sqrt{n}\widehat{\Delta}_N$ converges in distribution to a normal random variable with mean zero and variance $9\sigma_1^2$, where $\sigma_1^2$ is obtained by \eqref{sigma:eval} evaluating under $H_0$.
\end{corollary}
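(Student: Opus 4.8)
The plan is to obtain this corollary as an immediate specialization of the preceding theorem to the null model, so the only genuinely new input is the value of $\Delta_N$ under $H_0$ together with the identification of $\sigma_1^2$. The preceding theorem already gives $\sqrt{n}(\widehat{\Delta}_N-\Delta_N)\overset{d}{\rightarrow}N(0,9\sigma^2)$ whenever $\widehat{k}$ and $\widehat{\theta}$ are consistent, with $\sigma^2=Var[\mathbb{E}(g(X_1,X_2,X_3)\mid X_1)]$; the corollary should therefore read off from it once $\Delta_N$ is pinned down.

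First I would show that $\Delta_N=0$ under $H_0$. Under $H_0$ the sample is drawn from a GPD, so by Theorem \ref{thm_charII} the product $J_s(X;t)\,h_F(t)$ equals the constant $k$, which is precisely the relation forcing the integrand $\delta(x)=f(x)\int_x^{-\theta/\beta}\bar{F}^2(t)\,\mathrm{d}t+2k\bar{F}^3(x)$ to vanish identically on the support $[0,-\theta/\beta]$. Hence $\Delta_N=\int_0^{-\theta/\beta}\delta(x)\,\mathrm{d}x=0$. Substituting $\Delta_N=0$ into the conclusion of the preceding theorem converts $\sqrt{n}(\widehat{\Delta}_N-\Delta_N)$ into $\sqrt{n}\,\widehat{\Delta}_N$, and evaluating the variance functional \eqref{sigma:eval} under the GPD law produces $\sigma_1^2$; therefore $\sqrt{n}\,\widehat{\Delta}_N\overset{d}{\rightarrow}N(0,9\sigma_1^2)$. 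The Slutsky argument that absorbs the estimation error from replacing $k,\theta$ by $\widehat{k},\widehat{\theta}$ is inherited verbatim from the proof of Theorem \ref{thm:asymPositve}, so no additional work is needed there.

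The one point that deserves care, and which I expect to be the main obstacle, is confirming that the limit is genuinely non-degenerate, i.e. that $\sigma_1^2=Var[\mathbb{E}(g(X_1,X_2,X_3)\mid X_1)]>0$ when the $X_i$ are GPD. Vanishing of the mean $\Delta_N$ under $H_0$ does not by itself entail degeneracy of the first H\'ajek projection, but one must still rule it out: I would verify that the projection $\mathbb{E}(g(X_1,X_2,X_3)\mid X_1)$ is a non-constant function of $X_1$ under the null, which is what justifies both the $\sqrt{n}$ normalization and the use of the first-order $U$-statistic central limit theorem rather than a degenerate second-order limit. Once this non-degeneracy is checked, the chain $\Delta_N=0\Rightarrow\sqrt{n}\,\widehat{\Delta}_N\overset{d}{\rightarrow}N(0,9\sigma_1^2)$ closes the argument.
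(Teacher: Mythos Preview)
Your proposal is correct and matches the paper's (implicit) approach: the paper states this corollary without proof, treating it as an immediate consequence of the preceding theorem once one notes that $\Delta_N=0$ under $H_0$, exactly as you argue via Theorem~\ref{thm_charII}. Your additional attention to the non-degeneracy of $\sigma_1^2$ is a legitimate technical caveat that the paper does not address; it is not needed to derive the stated corollary as written, but it is the right thing to check if one wants the normal limit to be genuinely useful.
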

Now, using Slutsky’s theorem, the following result can be obtained using the above corollary.
\begin{corollary}
\label{neg:beta:corr}
    Under $H_0$, as $n\to\infty$, $\sqrt{n}\widehat{\Delta}_N^*$ converges in distribution to a normal random variable with mean zero and variance $\sigma_0=9\dfrac{\sigma_1^2}{\theta^2}$.
\end{corollary}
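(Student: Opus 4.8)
The plan is to obtain this statement as an immediate consequence of the preceding corollary via Slutsky's theorem, since $\widehat{\Delta}_N^*=\widehat{\Delta}_N/\widehat{\theta}$ is merely a rescaling of $\widehat{\Delta}_N$ by a consistent estimator of the (strictly positive) scale parameter. The starting point is the distributional limit already established for the unscaled statistic.

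First I would write the rescaled statistic as a product so that its two factors can be treated separately,
\begin{equation*}
\sqrt{n}\,\widehat{\Delta}_N^* = \frac{1}{\widehat{\theta}}\cdot \sqrt{n}\,\widehat{\Delta}_N .
\end{equation*}
The previous corollary supplies the behaviour of the second factor: under $H_0$, $\sqrt{n}\,\widehat{\Delta}_N\overset{d}{\rightarrow}N(0,9\sigma_1^2)$. For the first factor, consistency of $\widehat{\theta}$ gives $\widehat{\theta}\overset{P}{\rightarrow}\theta$, and since $\theta>0$ for every GPD the map $x\mapsto 1/x$ is continuous at $\theta$, so the continuous mapping theorem yields $1/\widehat{\theta}\overset{P}{\rightarrow}1/\theta$.

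With a sequence converging in distribution to a normal limit multiplied by a sequence converging in probability to the constant $1/\theta$, Slutsky's theorem delivers
\begin{equation*}
\sqrt{n}\,\widehat{\Delta}_N^* \overset{d}{\rightarrow} \frac{1}{\theta}\,N(0,9\sigma_1^2) = N\!\left(0,\frac{9\sigma_1^2}{\theta^2}\right),
\end{equation*}
where I have used that scaling a centred normal by the constant $1/\theta$ multiplies its variance by $1/\theta^2$. This is exactly the claimed limit with variance $\sigma_0=9\sigma_1^2/\theta^2$.

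There is no substantive obstacle in the argument: the only point requiring any care is ensuring the limiting scale factor $1/\theta$ is finite, which is guaranteed by the constraint $\theta>0$ built into the definition of the GPD, so that division by the estimator $\widehat{\theta}$ is asymptotically well behaved. Everything else is a textbook combination of the continuous mapping theorem and Slutsky's theorem applied to the factorisation above.
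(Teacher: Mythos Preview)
Your argument is correct and mirrors the paper's own approach: the paper simply states that the result follows from the preceding corollary by Slutsky's theorem, which is precisely the factorisation $\sqrt{n}\,\widehat{\Delta}_N^*=(1/\widehat{\theta})\cdot\sqrt{n}\,\widehat{\Delta}_N$ you carry out explicitly. There is nothing to add.
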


The asymptotic critical region for the scale invariant test can be obtained using Corollary \ref{neg:beta:corr}. Let $\widehat{\sigma}_0^2$ be a consistent estimator of $\sigma_0^2$, then for the negative $\beta$ case, the null hypothesis $H_0$ is rejected in favor of the alternative hypothesis $H_1^N$ at a significance level of $\alpha$ if
\begin{equation}
    \sqrt{n}\frac{|\widehat\Delta_N^*|}{\widehat{\sigma}_0}>z_{\alpha/2},
\end{equation}
where $z_{\alpha}$ is the upper $\alpha-$percentile of the standard normal distribution.
\subsection{Test for censored observation}
 Occurrences of right-censored observations are frequently seen in the analysis of lifetime data. A very few techniques address the issue of testing for GPD using censored samples. An alternative method involves replacing the distribution function with the Kaplan-Meier estimator in order to calculate the test statistic. In this technique, it is necessary to modify the metric used to quantify deviation from the null hypothesis in the presence of censored observations.
    Another method is the inverse probability censoring weighted scheme (IPCW), in which the censored data is adjusted by weighting it with the inverse of the survival function of the censoring variable provided by \cite{koul1981regression}, \cite{rotnitzky2005inverse} and \cite{datta2010inverse}. In this discussion, we explore the approach to address instances of censorship.

Assume that we have randomly censored observations, meaning that the censoring times are unrelated to the lifetimes and occur independently. Let the observed data are $n$ independent and identical (i.i.d.) copies of $(X^*,\delta)$, with $X^*=\min(X,C)$, where $C$ is the censoring time and $\delta=I(X\leq C)$. We investigate the testing problem mentioned based on $n$ i.i.d. observations $\{(X_i,\delta_i),\, 1\leq i\leq n\}$. Note that $\delta_i =0$ means that the $i$th object is censored by $C$, on the right and $\delta_i =1$ means $i$th object is not censored. We refer to \cite{koul1980testing} to define measure $\Delta_N$ for censored observations. We refer to \cite{datta2010inverse} to get an estimator of $\Delta_N$ ($\beta<0$ case) with censored observation as 
\begin{equation}
     \widehat{\Delta}_N^C =\frac{6}{n(n-1)(n-2)}\sum_{1\leq i<j<k\leq n} \frac{g(X_i^*,X_j^*,X_k^*)\delta_i \delta_j \delta_k}{\widehat{K}_c(X_i^*) \widehat{K}_c(X_j^*)\widehat{K}_c(X_k^*)},
\end{equation}
where $\widehat{K}_c(X_i^*), \widehat{K}_c(X_j^*),\widehat{K}_c(X_k^*)$ are strictly positive with probability one and 
\begin{align*}g(X_1^*,X_2^*,X_3^*)=\frac{1}{3}\left[\min(X_1^*,X_2^*)+\min(X_2^*,X_3^*)+\min(X_1^*,X_3^*)+(6\widehat{k}_c-3)\min(X_1^*,X_2^*,X_3^*)\right].
\end{align*}
Here, $\widehat{K}_c$ is the Kaplan-Meier estimator of $K_c$, the survival function of the censoring variable $C$ and \begin{equation}
    \widehat{k}_c=\frac{1}{2(\hat\beta_c-2)}\quad\text{ and }\quad\widehat{\beta}_c=\frac{\overline{X}_c}{\overline{X}_c-X_{(n)}^c},
\end{equation}
where $X_{(n)}^c=\max{\{X_i^*, 1\le i\le n\}}$ and
\begin{equation}
    \overline{X}_c=\frac{1}{n}\sum_{i=1}^{n}\frac{X_i^* \delta_i}{\widehat{K}_c(X_i^*)}.
\end{equation}
 Since $\overline{X}_c$ and $X_{(n)}^c$ are a consistent estimator of $\overline{X}$ and $X_{(n)}$ for censored observations, therefore using continuous mapping theorem for convergence in probability, we get $\widehat{\beta}_c$ and $\widehat{\theta}_c=-\widehat\beta_c\cdot X_{(n)}^c$ are consistent estimator of $\beta$ and $\theta$ for censored observations. Therefore, in the right censoring situation, the test statistic is given by
\begin{equation}
    \widehat{\Delta}_c^*=\frac{\widehat{\Delta}_N^C}{\widehat{\theta}_c},
\end{equation}
and the test procedure is to reject null hypothesis $H_0$ in favor of $H_1^N$ for larger values of $|\widehat{\Delta}_c^*|$.

For deriving the asymptotic distribution of $\widehat{\Delta}_c^*$, let us define $N_i^c(t) = I(X_i^* \le t, \delta_i = 0)$ as the counting process corresponding to the censoring random variable for the $i$-th subject and $R_i(u) = I(X_i^* \ge u)$. Let $\lambda_c(t)$ be the hazard rate of the censoring variable $C$. The martingale associated with the counting process \(N_i^c(t)\) is given by
\begin{equation}
M_i^c(t) = N_i^c(t) - \int_0^t R_i(u) \lambda_c(u) \, \mathrm{d}u.
\end{equation}

Let $G(x,y)=P(X_1\leq x, X_1^*\leq y, \delta_1=1)$, $x\in \mathbb{R}$, $H(t)=P(X_1^*\geq t)$ and 
\begin{equation}
    w(t) = \frac{1}{H(t)} \int_{\mathbb{R}\times [0,\infty)} \frac{g_c(x)}{K_c(y -)} I(y > t) \, \mathrm{d}G(x,y),
\end{equation}
where $g_c(x)=\mathbb{E}\left[g(x,X_2^*,X_3^*)\right]$. The next theorem follows from \cite{datta2010inverse} for the choice of the kernel 
\begin{align*}g_c(X_1^*,X_2^*,X_3^*)=\frac{1}{3}\left[\min(X_1^*,X_2^*)+\min(X_2^*,X_3^*)+\min(X_1^*,X_3^*)+(6\hat{k}_c-3)\min(X_1^*,X_2^*,X_3^*)\right]
\end{align*}
and under the assumption $\mathbb{E}g_c^2(X_1^*,X_2^*,X_3^*)<\infty$, 
\begin{equation*}
    \int_{\mathbb{R}\times [0,\infty)} \frac{g_c^2(x)}{K_c^2(y)}\, \mathrm{d}G(x,y)<\infty,
\end{equation*}
and 
\begin{eqnarray*}
    \int_{0}^{\infty} w^2(t)\lambda_c(t)\mathrm{d}t<\infty.
\end{eqnarray*}
\begin{theorem}
\label{theorem:censor}
    The distribution of $\sqrt{n}\left(\widehat{\Delta}_N^C-\Delta_N\right)$, as $n\rightarrow \infty$, is Gaussian with mean zero and variance $9\sigma_{1c}^2$, where $\sigma_{1c}^2$ is given by
    \begin{eqnarray*}
        \sigma_{1c}^2=Var\left( \frac{g_c(X)\delta_1}{K_c(X^*)}+\int_{0}^{\infty} w(t)\,\mathrm{d}M_1^c(t)\right).
    \end{eqnarray*}
\end{theorem}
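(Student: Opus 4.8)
The plan is to recognize $\widehat{\Delta}_N^C$ as an inverse-probability-of-censoring-weighted (IPCW) $U$-statistic of degree three and to invoke the asymptotic representation of \cite{datta2010inverse} for such statistics, after first removing the randomness introduced by the plug-in estimator $\widehat{k}_c$ that sits inside the kernel. Throughout I would take for granted the three integrability conditions displayed just before the theorem, since these are precisely the hypotheses required to quote \cite{datta2010inverse}.

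\emph{Freezing the kernel.} Exactly as in the proof of Theorem \ref{thm:asymPositve}, I would introduce the oracle statistic $\breve{\Delta}_N^C$ obtained by replacing $\widehat{k}_c$ by its probability limit $k=\tfrac{1}{2(\beta-2)}$ in the kernel $g$ while retaining the Kaplan--Meier weights $\widehat{K}_c$, and split
\[
\sqrt{n}\bigl(\widehat{\Delta}_N^C-\Delta_N\bigr)=\sqrt{n}\bigl(\widehat{\Delta}_N^C-\breve{\Delta}_N^C\bigr)+\sqrt{n}\bigl(\breve{\Delta}_N^C-\Delta_N\bigr).
\]
Since $\widehat{\beta}_c\overset{P}{\rightarrow}\beta$ forces $\widehat{k}_c\overset{P}{\rightarrow}k$, and $\widehat{k}_c-k$ multiplies a weighted $U$-statistic that converges in probability to a finite constant, the same Chebyshev/Slutsky argument used for $\widehat{\Delta}_P$ shows $\sqrt{n}(\widehat{\Delta}_N^C-\breve{\Delta}_N^C)\overset{P}{\rightarrow}0$. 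It therefore suffices to establish the central limit theorem for the fixed-kernel statistic $\breve{\Delta}_N^C$.

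\emph{The IPCW $U$-statistic central limit theorem.} With the kernel frozen, $\breve{\Delta}_N^C$ is exactly of the form treated in \cite{datta2010inverse}: a symmetric degree-three $U$-statistic in which each argument is reweighted by $\delta/\widehat{K}_c(\cdot)$. The engine that makes the weighting harmless in the limit is the IPCW unbiasedness identity $\mathbb{E}\bigl[\delta/K_c(X^*)\mid X\bigr]=1$, which guarantees that the weighted kernel has the same mean $\Delta_N$ as the uncensored kernel and that its first projection onto a single observation equals $g_c(X_i^*)\delta_i/K_c(X_i^*)$. Decomposing further into the oracle piece (true $K_c$) plus the estimation error $\widehat{K}_c-K_c$, and linearizing the latter through the martingale representation of the Kaplan--Meier estimator, the cited result yields the asymptotic i.i.d.\ expansion
\[
\sqrt{n}\bigl(\breve{\Delta}_N^C-\Delta_N\bigr)=\frac{3}{\sqrt{n}}\sum_{i=1}^{n}\left(\frac{g_c(X_i^*)\delta_i}{K_c(X_i^*)}-\Delta_N+\int_{0}^{\infty}w(t)\,\mathrm{d}M_i^c(t)\right)+o_P(1),
\]
where the martingale integral is the contribution of estimating $K_c$ and $w(t)$ is exactly the weight function defined before the theorem. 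The leading factor $3$ is the degree of the $U$-statistic; applying the ordinary central limit theorem to the i.i.d.\ centered summands gives mean zero and variance $9\sigma_{1c}^2$ with $\sigma_{1c}^2$ as displayed.

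\emph{Main obstacle.} The delicate step is the linearization of the censoring-weight estimation error: propagating $\widehat{K}_c-K_c$ through all three weight factors and showing that the combined effect collapses into the single martingale integral $\int_0^\infty w(t)\,\mathrm{d}M_1^c(t)$ requires Gill's functional expansion of the Kaplan--Meier process together with careful control of the cross terms, and it is here that the condition $\int_0^\infty w^2(t)\lambda_c(t)\,\mathrm{d}t<\infty$ is needed to ensure the martingale integral has finite variance. In practice the bulk of the work is verifying that our specific kernel $g$ satisfies the moment hypotheses of \cite{datta2010inverse} and that $\widehat{K}_c$ is bounded away from zero on the relevant range, rather than reproving their representation theorem; once these checks are in place the conclusion follows by direct citation.
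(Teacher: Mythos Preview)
Your proposal is correct and takes essentially the same approach as the paper: both simply invoke \cite{datta2010inverse} for the IPCW $U$-statistic of degree three with the stated kernel under the three integrability hypotheses. You go slightly further than the paper by explicitly handling the estimated $\widehat{k}_c$ inside the kernel via the freezing/Slutsky argument of Theorem~\ref{thm:asymPositve}, a step the paper leaves implicit.
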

\begin{corollary}
    Under the assumption of Theorem \ref{theorem:censor}, if $\mathbb{E}(X_1^2)<\infty$, the distribution of  $\sqrt{n}\left(\widehat{\Delta}_c^*-\Delta_N^*\right)$, as $n\rightarrow \infty$, is Gaussian with mean zero and variance $9\sigma_{c}^2$, where $\sigma_{c}^2$ is given by
    \begin{equation}
        \sigma_c^2=\frac{\sigma_{1c}^2}{\theta^2},
    \end{equation}
    where $\Delta_n^*=\dfrac{\Delta_N}{\theta}$.
\end{corollary}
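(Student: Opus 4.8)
The plan is to transfer the central limit theorem of Theorem~\ref{theorem:censor}, which concerns the unnormalized statistic $\widehat{\Delta}_N^C$, to the scale-invariant statistic $\widehat{\Delta}_c^{*}=\widehat{\Delta}_N^C/\widehat{\theta}_c$ by exploiting the consistency of $\widehat{\theta}_c$ already established in this section. The driving tools are the continuous mapping theorem and Slutsky's theorem, exactly mirroring the passage from the negative-$\beta$ theorem to Corollary~\ref{neg:beta:corr} in the uncensored setting. First I would recall that $\widehat{\theta}_c=-\widehat{\beta}_c X_{(n)}^c$ is a consistent estimator of $\theta$, so that the continuous mapping theorem gives $1/\widehat{\theta}_c\overset{P}{\rightarrow}1/\theta$.

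Next I would write the target quantity through the algebraic decomposition
\begin{equation*}
\sqrt{n}\left(\widehat{\Delta}_c^{*}-\Delta_N^{*}\right)=\frac{1}{\widehat{\theta}_c}\,\sqrt{n}\left(\widehat{\Delta}_N^C-\Delta_N\right)+\Delta_N\,\sqrt{n}\left(\frac{1}{\widehat{\theta}_c}-\frac{1}{\theta}\right),
\end{equation*}
which separates the fluctuation of the numerator from that of the scaling factor. For the first term, Theorem~\ref{theorem:censor} provides $\sqrt{n}(\widehat{\Delta}_N^C-\Delta_N)\overset{d}{\rightarrow}N(0,9\sigma_{1c}^2)$, and combining this with $1/\widehat{\theta}_c\overset{P}{\rightarrow}1/\theta$ through Slutsky's theorem yields convergence to $N(0,9\sigma_{1c}^2/\theta^2)=N(0,9\sigma_c^2)$, which is precisely the claimed limit. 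The extra hypothesis $\mathbb{E}(X_1^2)<\infty$ enters here: since the kernel $g_c$ is a linear combination of minima of the $X_i^{*}$ (which are dominated by the $X_i$), this condition is what makes the variance $\sigma_{1c}^2$ finite, and hence $\sigma_c^2=\sigma_{1c}^2/\theta^2$ finite, so that the Gaussian limit is nondegenerate.

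The main obstacle is the second term, namely controlling the contribution of the sampling fluctuation of $\widehat{\theta}_c$. Under $H_0$ we have $\Delta_N=0$, so this term vanishes identically and the result follows at once; this is the situation that parallels Corollary~\ref{neg:beta:corr}. To argue the stated limit away from the null, the clean route is to establish the joint asymptotic normality of $\sqrt{n}\bigl(\widehat{\Delta}_N^C-\Delta_N,\ \widehat{\theta}_c-\theta\bigr)$ and then apply the delta method to the map $\phi(a,b)=a/b$, whose gradient at $(\Delta_N,\theta)$ is $\left(1/\theta,\,-\Delta_N/\theta^2\right)$. The appearance of exactly $9\sigma_{1c}^2/\theta^2$ in the statement corresponds to the second coordinate of this gradient being annihilated, which is what happens precisely when $\Delta_N=0$.

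Consequently, the delicate step requiring the most care is verifying that the $\widehat{\theta}_c$-fluctuation does not inflate the asymptotic variance beyond $9\sigma_{1c}^2/\theta^2$ — either because the evaluation is effectively taken under the null (as in Corollary~\ref{neg:beta:corr}) or because, in a full joint-convergence treatment, the cross-covariance between $\widehat{\Delta}_N^C$ and $\widehat{\theta}_c$ does not contribute at the stated order. Once that point is settled, assembling the pieces via Slutsky's theorem gives the Gaussian limit with variance $9\sigma_c^2$, completing the argument.
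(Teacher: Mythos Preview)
Your approach matches the paper's: the proof there consists of exactly two ingredients --- the consistency of $\widehat{\theta}_c$ for $\theta$ (which the paper attributes to the consistency of $\overline{X}_c$ established by Zhao and Tsiatis) followed by an appeal to Slutsky's theorem applied to Theorem~\ref{theorem:censor}. In particular, the paper does not carry out the joint-normality/delta-method analysis you outline for the second term $\Delta_N\sqrt{n}\bigl(1/\widehat{\theta}_c-1/\theta\bigr)$; it simply invokes Slutsky and stops, so your additional care about whether the $\widehat{\theta}_c$-fluctuation could inflate the variance away from the null goes beyond the level of detail in the paper's own argument.
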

\begin{proof}
    The consistency of the estimator $\widehat{\theta}_c$ for $\theta$ is proved using the consistency of $\overline{X}_c$ for $\overline{X}$ given by \cite{zhao2000estimating}. Therefore, the result follows from the above theorem by applying Slutsky's theorem. 
\end{proof}
As suggested by \cite{datta2010inverse}, the reweighted average technique is used to simplify the asymptotic analysis. therefore the reweighted approach is used to find an estimator of $\sigma_{1c}^2$. An estimator of $\sigma_{1c}^2$ is given by
\begin{equation*}
    \widehat{\sigma}_{1c}^2=\frac{9}{n-1}\sum_{i=1}^{n}(V_i-\overline{V})^2,
\end{equation*}
where
\begin{align*}
    &\widehat{h}_1(x)=\frac{1}{n^2}\sum_{1\le j<k\le n}^{n}\frac{g(x,X_j^*,X_k^*)\delta_j\delta_k}{\widehat{K}_c(X_j^*)\widehat{K}_c(X_k^*)},\quad\quad\quad \xi_j
    =\frac{\widehat{h}_1(X_j)\delta_j}{\widehat{K}_c(X_j^*)}\\&\widehat{w}(X_i^*)=\frac{\sum_{j=1}^{n}\xi_jI(X_j^*>X_i^*)}{\sum_{j=1}^{n}I(X_j^*\ge X_i^*)},\quad\quad\quad \quad\quad  \phi_i=\widehat{w}(X_i^*)(1-\delta_i)\\
    &V_i=\xi_i+\phi_i-\sum_{j=1}^n\frac{\phi_iI(X_i^*>X_j^*)}{\sum_{j=1}^{n}I(X_j^*\ge X_i^*)},\quad \text{ and }\quad \overline{V}=\frac{1}{n}\sum_{i=1}^n V_i.
\end{align*}
Therefore, an estimator for $\sigma_c^2$ is given by
\begin{equation}
    \widehat{\sigma}_c^2=\frac{\widehat{\sigma}_{1c}^2}{\widehat{\theta}_c}.
    \label{eqn:censoring:var}
\end{equation}
\begin{corollary}
    Under the assumption in Theorem \ref{theorem:censor}, let $\sigma_{0c}^2$ denote the value of $\sigma_c^2$ when evaluated under $H_0$. As $n\to\infty$, $\sqrt{n}\widehat{\Delta}_c^*$ will converge in distribution to a Gaussian random variable with mean zero and variance $9\sigma_{0c}^2$ under the null hypothesis $H_0$.
\end{corollary}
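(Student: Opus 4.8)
The plan is to obtain this statement as an immediate specialization of the preceding corollary (the one following Theorem \ref{theorem:censor}) to the null hypothesis, so the argument is short and hinges on a single observation. First I would recall that the statistic $\widehat{\Delta}_c^*$ was built so that its population counterpart $\Delta_N^*=\Delta_N/\theta$ measures the departure of $F$ from the GPD family through the dynamic survival extropy identity of Theorem \ref{thm_charII}. Under $H_0$, namely that $F$ is GPD, the relation $J_s(X;t)\cdot h_F(t)=k$ holds, which forces $\delta(x)\equiv 0$ on the support and hence $\Delta_N=\int_0^{-\theta/\beta}\delta(x)\,\mathrm{d}x=0$; dividing by the scale parameter $\theta>0$ gives $\Delta_N^*=0$ as well.

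With this in hand, I would invoke the preceding corollary, which asserts that under the stated moment and integrability assumptions $\sqrt{n}\bigl(\widehat{\Delta}_c^*-\Delta_N^*\bigr)\overset{d}{\rightarrow}N(0,9\sigma_c^2)$ with $\sigma_c^2=\sigma_{1c}^2/\theta^2$. Substituting $\Delta_N^*=0$ collapses the centering term and yields $\sqrt{n}\,\widehat{\Delta}_c^*\overset{d}{\rightarrow}N(0,9\sigma_c^2)$ directly. Finally, since $\sigma_{0c}^2$ is by definition the value of $\sigma_c^2$ evaluated under $H_0$, the limiting variance equals $9\sigma_{0c}^2$, which completes the argument.

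There is essentially no analytical obstacle beyond these two substitutions: the genuine content has already been discharged in Theorem \ref{theorem:censor} and its corollary, where the Gaussian limit and the variance formula $\sigma_{1c}^2=Var\bigl(g_c(X)\delta_1/K_c(X^*)+\int_0^\infty w(t)\,\mathrm{d}M_1^c(t)\bigr)$ are established via the IPCW martingale representation of \cite{datta2010inverse}. The only points requiring care are verifying that $\Delta_N=0$ holds \emph{exactly} under $H_0$ (so that no $O(\sqrt{n})$ bias term survives the scaling), which follows from $\delta(x)\equiv 0$, and noting that $\sigma_c^2$ is a well-defined functional of the underlying distribution so that its evaluation under $H_0$ is meaningful; both are immediate from the explicit forms already derived, so the corollary follows by a routine application of the limit theorem together with the vanishing of $\Delta_N^*$.
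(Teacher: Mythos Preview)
Your proposal is correct and matches the paper's (implicit) approach exactly: the paper states this corollary without proof, and the intended argument is precisely to substitute $\Delta_N^*=0$ under $H_0$ into the preceding corollary's limit $\sqrt{n}(\widehat{\Delta}_c^*-\Delta_N^*)\overset{d}{\rightarrow}N(0,9\sigma_c^2)$ and then rename $\sigma_c^2$ as $\sigma_{0c}^2$.
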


Therefore, in the case of right censoring, we reject null hypothesis $H_0$ in favor of $H_1^N$ at a significance level $\alpha$, if
\begin{equation}
    \frac{\sqrt{n}|\widehat{\Delta}_c^*|}{3\widehat{\sigma}_{0c}}>z_{\alpha/2},
\end{equation}
where $\widehat{\sigma}_{0c}$ is a consistent estimator of $\sigma_{0c}$ and can be estimated using \eqref{eqn:censoring:var} under $H_0$ and $z_{\alpha}$ is the upper $\alpha-$percentile of the standard normal distribution.

\begin{remark}
    Similarly, we can prove the positive shape parameter case by taking consistent estimator of $\beta$ and $\theta$ for censored observation. The proof will be in similar fashion, hence omitted.
\end{remark}
\section{Simulation study}
This section is divided into three parts. First, we present the method for estimating parameters in our proposed test, as provided by \cite{VILLASENORALVA20093835}, addressing both cases separately. Next, we include a power analysis of our test. Finally, we discuss some real-life applications.

\subsection{Estimation of parameters}
\label{estm_par_section}
The most usual methods for estimating the parameters of a GPD are maximum likelihood (ML), method of moments (MOM) and probability weighted moments approaches. One may refer to \cite{Hosking01081987} for a detailed study and comparison of these three estimator. Let $x_1,x_2,\ldots,x_n$ be $n$ iid realizations of $X$, where $X\sim GPD(\theta,\beta)$. The ML estimation for a GPD parameters $(\theta,\beta)$ are simultaneous solution of 
\begin{equation}
    n\theta-(\beta+1)\sum_{i=1}^{n}\left[1+\frac{x_i}{\theta}\right]^{-1}x_i=0
\end{equation}
and
\begin{equation}
    \theta\sum_{i=1}^{n}\log \left[1+\frac{x_i}{\theta}\right]-(\beta+1)\sum_{i=1}^{n}\left[1+\frac{x_i}{\theta}\right]^{-1}x_i=0.
\end{equation}
When $\beta<-1$, the log-likelihood function can be made as large as possible by taking $\theta$ arbitrary close to $1/x_{(n)}$, where $x_{(n)}=\max{(x_i,i=1,\ldots, n)}$. Therefore in such condition ML estimator do not exists. In addition when $-1<\beta<-0.5$, the ML estimators do not perform well as given by  \cite{grimshaw1993computing}. We also check our test using ML estimator by simulation, which does not provide suitable results. The MOM estimators of $\theta$ and $\beta$ are
\begin{equation}
    \widehat{\beta}_{MOM}=\frac{1}{2}\left(1-\frac{\bar{X}^2}{S^2}\right),
\end{equation}
and
\begin{equation}
    \widehat{\theta}_{MOM}=\frac{1}{2}\bar{X}\left(1+\frac{\bar{X}^2}{S^2}\right).
\end{equation}
where $\bar{X}$ and $S^2$ as sample mean and sample variance respectively. The first two moments of GPD exists only when $\beta<1$ and $\beta<0.5$, respectively. therefore, the we can apply MOM and probability weighted moment estimators to a restricted value of $\beta$. There are some other approaches also using Bayesian perspective by \cite{zhang2009new}, elemental percentile method (EPM) by \cite{Castillo01121997}, minimum distance estimation method by \cite{Chen02102017}. These methods works for all values of $\beta$, but while using these for our proposed test, it didn't work well. Additionally, these method has a big computational cost, which may be too high when sample size will be large. \cite{VILLASENORALVA20093835} provided two new estimators namely asymptotic maximum likelihood (AML) estimators and combination of ML and MOM (CMM). The AML estimators of $\beta$ and $\theta$ are
\begin{equation}
    \widehat{\beta}_{AML}=-W_{n-k+1}+\frac{1}{k}\sum_{j=1}^{k}W_n-j+1
\end{equation}
and
\begin{equation}
    \widehat{\theta}_{AML}=\widehat{\beta}_{AML}\exp\left[W_{n-k+1}+\widehat{\beta}_{AML} \log\left(\frac{k}{n}\right)\right],
\end{equation}
where $W_j=\log X_{(j)}$, $1\le k\le n$ and $X_{(j)}$ is $j$th order statistics. This estimator exists for all values of $k$ and it works well for our test. The CMM estimators of $\beta$ and $\theta$ are
\begin{equation}
    \widehat{\beta}_{CMM}=\frac{\overline{X}}{\overline{X}-X_{(n)}}
\end{equation}
and
\begin{equation}
    \widehat{\theta}_{CMM}=-\widehat{\beta}_{CMM}\cdot X_{(n)}.
\end{equation}

Interested reader can refer to \cite{VILLASENORALVA20093835} and \cite{Chen02102017} to see the efficiency of these tests. It is visible that AML and CMM estimators are easy to use since these work for all values of $k$ and computationally easier also. Therefore, we use AML estimator to estimate $\theta$ and $\beta$ for the positive shape parameter case and CMM estimator for negative shape parameter.

\subsection{Power of the test}
This section presents the outcomes of a Monte Carlo simulation experiment aimed at evaluating the power of the proposed test for the GPD. Since exponential and uniform distributions are specific instances of GPD with parameters $\beta=0$ and $\beta=-1$, respectively, we evaluate the statistical power of our test for these distributions against other alternatives for a significance level $\alpha=0.05$ and for sample sizes of $n=20,\, 30\text{ and } 50$.  The power values are high for numerous aforementioned options, even with a small sample sizes and power increases as the sample size increase. We generated 1,000 samples from each alternate distribution and employed our test.  The power values are presented in Table \ref{tab:power_cases}.
\begin{table}[H]
\centering
\begin{tabular}{@{}l*{6}{m{1.5cm}}@{}}
\toprule
{Distribution} & \multicolumn{3}{c}{Exponential case} & \multicolumn{3}{c}{Uniform case} \\
\cmidrule{1-1}\cmidrule(lr){2-4}\cmidrule(lr){5-7}
 $(n,\beta)$& {$(20,0)$} & {$(30,0)$} & {$(50,0)$} & {$(20,-1)$} & {$(30,-1)$} & {$(50,-1)$} \\
\midrule
$\text{Beta}(5,5)$         & 0.999 & 1.000 & 1.000 & 0.602 & 0.820 & 0.925 \\
$\text{Weibull}(2,1)$      & 0.660 & 0.930 & 1.000 & 0.519 & 0.676 & 0.791 \\
$\text{Weibull}(3,1)$      & 0.981 & 1.000 & 1.000 & 0.654 & 0.865 & 0.938 \\
$\text{Gamma}(5,1)$        & 0.725 & 0.983 & 1.000 & 0.835 & 0.967 & 0.993 \\
$\text{Gamma}(8,1)$        & 0.944 & 1.000 & 1.000 & 0.891 & 0.977 & 1.000 \\
$\text{Gen-Gamma}(2,1/3)$  & 1.000 & 1.000 & 1.000 & 0.627 & 0.843 & 0.935 \\
$\text{Gen-Gamma}(2,1/2)$  & 0.952 & 1.000 & 1.000 & 0.774 & 0.940 & 0.990 \\
$\text{Gen-Gamma}(1,1/2)$  & 0.621 & 0.929 & 1.000 & 0.526 & 0.639 & 0.805 \\
abs$(N(2,1))$       & 0.833 & 0.985 & 1.000 & 0.369 & 0.473 & 0.602 \\
abs$(N(3,1))$       & 0.993 & 1.000 & 1.000 & 0.604 & 0.838 & 0.938 \\
$\chi^2(6)$                & 0.462 & 0.759 & 0.994 & 0.659 & 0.832 & 0.917 \\
$\chi^2(15)$               & 0.925 & 1.000 & 1.000 & 0.875 & 0.979 & 0.996 \\
abs$(\text{Gumbel}(3,2))$     & 0.375 & 0.695 & 0.963 & 0.479 & 0.641 & 0.766 \\
abs$(\text{Gumbel}(5,2))$     & 0.785 & 0.990 & 1.000 & 0.911 & 0.987 & 0.999 \\
\bottomrule
\end{tabular}
\caption{Power Analysis for Exponential ($\beta=0$) and Uniform Cases ($\beta=-1$) for a significance level $\alpha=0.05$}
\label{tab:power_cases}
\end{table}
    To estimate the power of the proposed test, we conduct simulations based on the following alternative:  Beta($\alpha_1,\alpha_2$), Weibull($\alpha_1,\alpha_2$), Gamma($\alpha_1,\alpha_2$), Generalized gamma($\alpha_1,\alpha_2$) with a positive power $\alpha_2$ of gamma variable with shape parameter $\alpha_1$, absolute value of Normal($\mu,\sigma$), Chi-square($\nu$), absolute value of Gumbel($\alpha_1,\alpha_2$).  The findings are presented in Table \ref{tab:power_analysis}.  It is important to observe that as the alternative hypothesis moves further away from the null hypothesis, or as the sample size increases, the power of the test also increases. We conduct a comparison of our test with the one proposed by \cite{VILLASENORALVA20093835}, as they introduced the estimators of $\beta$ and $\theta$, which are utilized in our test.  Upon examining [Table 2: \cite{VILLASENORALVA20093835}], it is evident that our test demonstrates greater power compared to theirs across the majority of alternative distributions. 
\begin{sidewaystable}
\centering
\resizebox{\textwidth}{!}{
\begin{tabular}{@{}l*{8}{m{2cm}}@{}}
\toprule
Distribution / $(n,\beta)$ & $(20,0.1)$ & $(20,0.2)$ & $(20,1)$ & $(30,0.1)$ & $(30,0.2)$ & $(50,0.5)$ & $(30,-0.5)$ & $(50,-0.5)$ \\
\midrule
$\text{Beta}(1,2)$ & 0.439 & 0.532 & 0.994 & 0.754 & 0.856 & 0.999 & 0.044 & 0.050 \\
$\text{Beta}(2,1)$ & 1.000 & 1.000 & 1.000 & 1.000 & 1.000 & 1.000 & 0.002 & 0.010 \\
$\text{Beta}(5,5)$ & 1.000 & 1.000 & 1.000 & 1.000 & 1.000 & 1.000 & 0.494 & 0.950 \\
$\text{Weibull}(2,1)$ & 0.732 & 0.860 & 1.000 & 0.986 & 0.997 & 1.000 & 0.423 & 0.798 \\
$\text{Weibull}(3,1)$ & 0.994 & 1.000 & 1.000 & 1.000 & 1.000 & 1.000 & 0.612 & 0.955 \\
$\text{Gamma}(5,1)$ & 0.854 & 0.937 & 1.000 & 0.999 & 1.000 & 1.000 & 0.856 & 0.995 \\
$\text{Gamma}(8,1)$ & 0.973 & 0.991 & 1.000 & 1.000 & 1.000 & 1.000 & 0.879 & 0.999 \\
$\text{Gen-Gamma}(2,1/3)$ & 1.000 & 1.000 & 1.000 & 1.000 & 1.000 & 1.000 & 0.543 & 0.935 \\
$\text{Gen-Gamma}(2,1/2)$ & 0.986 & 0.994 & 1.000 & 1.000 & 2.000 & 1.000 & 0.759 & 0.989 \\
$\text{Gen-Gamma}(1,1/2)$ & 0.726 & 0.859 & 1.000 & 0.979 & 0.997 & 1.000 & 0.480 & 0.804 \\
abs$(N(2,2))$ & 0.432 & 0.531 & 0.995 & 0.758 & 0.863 & 1.000 & 0.048 & 0.069 \\
abs$(N(2,1))$ & 0.893 & 0.954 & 1.000 & 0.998 & 1.000 & 1.000 & 0.236 & 0.579 \\
abs$(N(3,1))$ & 0.998 & 0.999 & 1.000 & 1.000 & 1.000 & 1.000 & 0.568 & 0.940 \\
$\chi^2(6)$ & 0.544 &)0.701 & 1.000 & 0.927 & 0.984 & 1.000 & 0.615 & 0.925 \\
$\chi^2(15)$ & 0.976 & 0.999 & 1.000 & 1.000 & 1.000 & 1.000 & 0.884 & 0.997 \\
abs$(\text{Gumbel}(3,2))$ & 0.506 & 0.581 & 0.999 & 0.865 & 0.949 & 1.000 & 0.415 & 0.761 \\
abs$(\text{Gumbel}(5,2))$ & 0.879 & 0.950 & 1.000 & 1.000 & 1.000 & 1.000 & 0.920 & 1.000 \\
abs$(\text{Gumbel}(5,5))$ & 0.240 & 0.301 & 0.981 & 0.465 & 0.638 & 0.998 & 0.066 & 0.131 \\
\bottomrule
\end{tabular}
}
\caption{Power of the proposed test against various alternatives for a significance level $\alpha=0.05$}
\label{tab:power_analysis}
\end{sidewaystable}
\subsection{Real life applications}
This section presents two actual datasets from real-world scenarios. While testing on real datasets, we reject the null hypothesis $H_0$, when both the alternative hypothesis for positive and negative beta, $H_1^P$ and $H_1^P$, cannot be rejected at a significance level $\alpha$. It has been seen by many authors including \cite{Hosking01081987} and \cite{zhang2009new}, that the values of shape parameter $\beta$ will be negative and specifically $-0.5\le \beta<0$, so it is better to test the negative case first.

The analysis of ozone levels in Delhi, India is presented in this section, contributing new data to the existing literature.  The second dataset we included is Bilbao waves data. It has already been examined in the literature by numerous authors for purposes such as parameter estimation or testing goodness of fit.  
\subsection*{Data of ozone (O$_3$) level in Delhi, India}
The examination of the probabilistic behavior of air pollutant concentrations in the atmosphere holds significant importance for implementing measures that promote human health protection in urban areas. Ozone is a gas consisting of three oxygen atoms.  In the higher strata of the Earth's atmosphere, it absorbs detrimental UV rays. At ground level, ozone is produced through a chemical process involving sunlight and organic gases, as well as nitrogen oxides released by automobiles, power stations, chemical facilities, and other sources. Ozone concentrations are typically elevated throughout spring and summer, whereas they are diminished in winter.  Ozone concentrations peak in the afternoon and are often greater in rural areas than in urban locales.  Ozone constitutes a significant element of summer air pollution events. Studying ozone (O$_3$) levels exceeding 100 $\mu g/m^3$ (daily maximum 8-h mean) is crucial, as they are deemed hazardous to human health based on WHO (World Health Organization) guidelines.
\begin{table}[H]
\centering
\footnotesize
\begin{tabular}{m{2.5cm}  m{1.2cm} m{2.5cm}m{1.2cm}  m{2.5cm} m{1.2cm}}
\toprule
\textbf{Date} & \textbf{Excess} & \textbf{Date} & \textbf{Excess} & \textbf{Date} & \textbf{Excess} \\
\midrule
11-Jun-2015 & 38.5 & 18-Jan-2016 & 7.94 & 22-May-2016 & 1.65 \\
20-Oct-2015 & 1.74 & 19-Jan-2016 & 4.61 & 25-May-2016 & 26.55 \\
02-Nov-2015 & 36.67 & 20-Jan-2016 & 0.12 & 26-May-2016 & 60.06 \\
04-Nov-2015 & 14.43 & 22-Jan-2016 & 36.52 & 27-May-2016 & 69.35 \\
07-Nov-2015 & 7.69 & 23-Jan-2016 & 47.56 & 28-May-2016 & 20.76 \\
08-Nov-2015 & 3.4 & 24-Jan-2016 & 32.54 & 29-May-2016 & 1.26 \\
09-Nov-2015 & 8.12 & 25-Jan-2016 & 3.53 & 03-Jun-2016 & 29.18 \\
10-Nov-2015 & 15.45 & 26-Jan-2016 & 39.27 & 04-Jun-2016 & 15.31 \\
13-Nov-2015 & 1.14 & 27-Jan-2016 & 42.54 & 06-Jun-2016 & 0.24 \\
15-Nov-2015 & 1.15 & 28-Jan-2016 & 51.6 & 24-Jun-2016 & 1.3 \\
19-Nov-2015 & 1.68 & 29-Jan-2016 & 75.04 & 20-Sep-2016 & 34.72 \\
20-Nov-2015 & 24.74 & 30-Jan-2016 & 57.74 & 21-Sep-2016 & 44.45 \\
21-Nov-2015 & 14.3 & 05-Feb-2016 & 21.22 & 28-Sep-2016 & 13.81 \\
22-Nov-2015 & 16.29 & 06-Feb-2016 & 49.67 & 01-Oct-2016 & 9.06 \\
23-Nov-2015 & 32.26 & 09-Feb-2016 & 6.92 & 04-Oct-2016 & 23.99 \\
24-Nov-2015 & 13.35 & 10-Feb-2016 & 29.76 & 06-Oct-2016 & 9.08 \\
30-Nov-2015 & 31.34 & 12-Feb-2016 & 17.2 & 09-Oct-2016 & 28.17 \\
05-Dec-2015 & 29.79 & 13-Feb-2016 & 1.47 & 10-Oct-2016 & 12.52 \\
06-Dec-2015 & 22.22 & 19-Feb-2016 & 8.89 & 11-Oct-2016 & 4.67 \\
07-Dec-2015 & 3.29 & 24-Feb-2016 & 100.41 & 15-Oct-2016 & 20.33 \\
08-Dec-2015 & 39.79 & 26-Feb-2016 & 15.84 & 17-Oct-2016 & 12.83 \\
09-Dec-2015 & 31.99 & 27-Feb-2016 & 17.3 & 18-Oct-2016 & 1 \\
10-Dec-2015 & 23.11 & 28-Feb-2016 & 30.99 & 20-Oct-2016 & 7.03 \\
11-Dec-2015 & 19.41 & 29-Feb-2016 & 37.57 & 23-Oct-2016 & 22.74 \\
12-Dec-2015 & 22.43 & 01-Mar-2016 & 38.3 & 24-Oct-2016 & 9.22 \\
13-Dec-2015 & 1.58 & 02-Mar-2016 & 57.07 & 28-Oct-2016 & 7.12 \\
23-Dec-2015 & 11.69 & 03-Mar-2016 & 13.89 & 29-Oct-2016 & 39.22 \\
01-Jan-2016 & 8.14 & 04-Mar-2016 & 3.88 & 30-Oct-2016 & 157.73 \\
05-Jan-2016 & 31.4 & 24-Mar-2016 & 69.36 & 31-Oct-2016 & 53.52 \\
06-Jan-2016 & 35.04 & 25-Mar-2016 & 42.38 & 01-Nov-2016 & 93.31 \\
07-Jan-2016 & 86.07 & 21-Apr-2016 & 3.07 & 02-Nov-2016 & 60.01 \\
08-Jan-2016 & 50.8 & 22-Apr-2016 & 10.09 & 03-Nov-2016 & 8.33 \\
09-Jan-2016 & 0.54 & 28-Apr-2016 & 17.59 & 04-Nov-2016 & 77.07 \\
10-Jan-2016 & 30.02 & 03-May-2016 & 1.53 & 25-Oct-2017 & 6.16 \\
11-Jan-2016 & 40.72 & 18-May-2016 & 5.42 & 28-Oct-2017 & 10.71 \\
12-Jan-2016 & 31.63 & 21-May-2016 & 7.9 & 10-Nov-2017 & 4.94 \\
\bottomrule
\end{tabular}
\caption{Ozone level excess data (in $\mu g$/$m^3$) of Delhi, India for June 2015 to November 2017}
\label{ozone_data_delhi}
\end{table}

We use the data of ozone level where the ozone level cross the limit 100 $\mu g/m^3$, it is presented in Table \ref{ozone_data_delhi}. This data have been registered during June 2015 to November 2017 in an air quality monitoring station in Delhi, India. The data has been made publicly available by the ``Central Pollution Control Board: https://cpcb.nic.in/'' which is the official portal of Government of India.

We implement our proposed test on this dataset with $n=108$, resulting in $\widehat\Delta_N^*=0.0232$ and $\beta_{CMM}=-0.1955$.  The critical value for \( n=108 \) and \( \beta=-0.1955 \) is \( 0.0342 \) for a 99\% confidence level.  The value of $\widehat\Delta_N^*$ is lower than the crucial value, hence, we do not reject the null hypothesis.  Therefore, there is no evidence contradicting the hypothesis that these data follow to a generalized Pareto distribution with a shape parameter $\beta<0$. We checked our test for positive $\beta$ case too, where we get $\widehat{\Delta}_P=0.1663$ and $\beta_{AML}=0.4251$. The critical value for \( n=108 \) and \( \beta=0.4251 \) is \( 0.1391 \) for a $0.01$ significance level, so we reject the null hypothesis for positive shape parameter value. We verify this data using Kolmogorov-Smirnov (K-S) test, Anderson-Darling and Chi-Square test, the test statistic values for these tests are $0.04528$, $0.33678$ and $2.3751$, respectively. None of these three tests reject the null hypotheses when a negative shape parameter value is present. This strengthens our claim. 

\subsection*{Bilbao Waves Data}
The example comprises the zero-crossing hourly mean periods (in seconds) of the sea waves recorded at the Bilbao buoy, Spain.  The data serve to examine the impact of periods on beach morphodynamics and other characteristics associated with the right tail studied by \cite{Castillo01121997}.  Table \ref{tab:bilbao_waves} includes only data exceeding 7 seconds. 
\begin{table}[H]
\centering
\footnotesize
\begin{tabular}{*{14}{c}}
\toprule
7.05 & 7.26 & 7.46 & 7.59 & 7.69 & 7.82 & 7.90 & 7.97 & 8.11 & 8.21 & 8.40 & 8.51 & 8.69 & 8.85 \\
9.06 & 9.23 & 9.46 & 9.75 & 9.12 & 9.24 & 9.47 & 9.78 & 9.16 & 9.27 & 9.59 & 9.79 & 9.43 & 9.74 \\
7.12 & 7.27 & 7.46 & 7.59 & 7.72 & 7.83 & 7.91 & 7.99 & 8.12 & 8.23 & 8.41 & 8.52 & 8.71 & 8.86 \\
7.15 & 7.28 & 7.47 & 7.61 & 7.72 & 7.83 & 7.93 & 8.00 & 8.15 & 8.23 & 8.42 & 8.53 & 8.72 & 8.88 \\
7.18 & 7.30 & 7.48 & 7.63 & 7.72 & 7.83 & 7.93 & 8.03 & 8.15 & 8.30 & 8.43 & 8.54 & 8.74 & 8.88 \\
9.17 & 9.29 & 9.59 & 9.79 & 9.17 & 9.30 & 9.60 & 9.80 & 9.18 & 9.32 & 9.61 & 9.84 & 9.22 & 9.90 \\
7.19 & 7.31 & 7.48 & 7.65 & 7.72 & 7.84 & 7.93 & 8.03 & 8.15 & 8.30 & 8.43 & 8.56 & 8.74 & 8.94 \\
7.20 & 7.31 & 7.52 & 7.66 & 7.72 & 7.85 & 7.94 & 8.05 & 8.18 & 8.31 & 8.45 & 8.58 & 8.74 & 8.98 \\
7.20 & 7.32 & 7.54 & 7.66 & 7.77 & 7.85 & 7.95 & 8.06 & 8.18 & 8.31 & 8.48 & 8.59 & 8.74 & 8.98 \\
7.20 & 7.33 & 7.55 & 7.67 & 7.77 & 7.88 & 7.95 & 8.06 & 8.18 & 8.32 & 8.49 & 8.59 & 8.79 & 8.99 \\
7.20 & 7.37 & 7.55 & 7.67 & 7.79 & 7.88 & 7.97 & 8.07 & 8.19 & 8.32 & 8.50 & 8.60 & 8.81 & 9.01 \\
7.25 & 7.40 & 7.58 & 7.68 & 7.79 & 7.90 & 7.97 & 8.10 & 8.20 & 8.33 & 8.50 & 8.65 & 8.84 & 9.03 \\
9.18 & 9.33 & 9.62 & 9.85 & 9.18 & 9.36 & 9.63 & 9.89 & 9.21 & 9.38 & 9.66 & & & \\
\bottomrule
\end{tabular}
\caption{The Bilbao waves data: the zero-crossing hourly mean periods (in seconds), above 7 sec, of the sea waves measured in Bilbao buoy.}
\label{tab:bilbao_waves}
\end{table}

The application of GPD to this dataset has been extensively examined in existing literature (e.g., \cite{Castillo01121997}, \cite{zhang2009new}).  It has been observed by \cite{zhang2009new}, that when the threshold time $t\ge 7.5$, the Generalized Pareto Distribution effectively models the exceedance. We apply our test to the data when $t\ge 7.5$, then we get $n=154$, the estimated shape parameter value and test statistics as $\widehat{\beta}_{CMM}=-0.7133$ and $\widehat\Delta_N^*=0.01208$ respectively. The value of $\widehat{\beta}_{CMM}$ is close to many other good estimators of $\beta$ given in \cite{Chen02102017}. The p-value for this data is $0.291$ for the negative shape parameter case. Therefore, we can not reject the null hypothesis $H_0$ at a significance level $0.05$.

\section{Conclusion}
A characterization for GPD is presented utilizing Stein's type identity, even though the support of GPD varies according to the shape parameter $\beta$.  Characterization for univariate distributions with either semi-bounded or bounded support has been introduced by \cite{betsch2021fixed}. We also present an alternative characterization of GPD through the lens of dynamic survival extropy, which serves as a characterization for exponential, uniform, and modified GPD at specific values of the proportionality constant $k$. We independently present a goodness of fit test tailored for both positive and negative values of $\beta$. For $\beta>0$, we employ a characterization based on Stein's identity, while for $\beta<0$, we utilize a characterization grounded in dynamic survival extropy.  Our test offers a more straightforward and simple calculation than traditional methods, including the Kolmogorov-Smirnov test and the Anderson-Darling test.  A Monte Carlo simulation study utilizing various sample sizes and shape parameter values demonstrates that it maintains high power, even with small sample sizes. The asymptotic properties of the test have been established on the premise that a consistent estimator of $\theta$ and $\beta$ will be utilized for the evaluation of test statistics. Recognizing the inherent challenge posed by censored data, we expanded the test to accommodate right censored data. Real-life applications utilizing actual datasets have been incorporated, including the ozone (O$_3$) level data exceeding 100$\mu g/m^3$ in Delhi, India, from June 2015 to November 2017 and zero crossing hourly mean period (in seconds), above 7 sec, of the sea waves in Bilbao buoy, Spain. Our proposed test is utilized to determine if the excess data follows the Generalized Pareto Distribution or not.
%\newpage
\section*{Conflict of interest}
No conflicts of interest are disclosed by the authors.
%\noindent \textbf{\Large Acknowledgement} \\
%\\
%The authors are thankful to the referees for their valuable suggestions which significantly improved the paper.\\
%\\		
\section*{Funding} 
Gaurav Kandpal would like to acknowledge financial support from the University Grant Commission, Government of India (Student ID : 221610071232).

\printbibliography

\end{document}